\def \R{\mathbb{R}}
\def \O{\mathscr{O}}
\def \U{\mathcal{U}}
\def \L{\mathcal{L}}
\def \K{\mathcal{K}}
\def \A{\mathcal{A}}
\def \V{\mathcal{V}}
\def \I{\mathcal{I}}
\def \C{\mathscr{C}}
\def \P{\mathscr{P}}
\def \OO{D}
\def \<{\langle}
\def \>{\rangle}
\newcommand{\QP}{\mathit{QP}}
\renewcommand{\iff}{\Leftrightarrow}
\newcommand{\dimp}{\Leftrightarrow}
\renewcommand{\implies}{\Rightarrow}
\newcommand{\rimp}{\Rightarrow}
\newcommand{\commentout}[1]{}
\newcommand{\journal}[1]{}
\renewcommand{\citeyear}{\shortcite}
\newcommand{\Lbc}{\mathcal{L}^{\mathit{bc}}}
\newcommand{\End}{\mathit{End}}
\def \E{K}
\def \W{\Omega}
\def \w{\omega}
\def \prop{\textsc{prop}}
\def \con{\textsc{con}}
\renewcommand{\phi}{\varphi}
\renewcommand{\blacksquare}{\vrule height7pt width4pt depth1pt}
\newcommand{\shortv}[1]{#1}
\newcommand{\fullv}[1]{#1
}
\newtheorem{thm}{Theorem}[section]
\newtheorem{ex}{Example}[section]
\newtheorem{proposition}{Proposition}[section]
\newtheorem{definition}{Definition}[section]
\newtheorem{lemma}{Lemma}[section]
\DeclareMathOperator*{\argmax}{arg\,max}
\begin{document}
%
\title{Partial Awareness}
\author{  and and Evan Piermont \\
}
\author{Joseph Y. Halpern\\
Computer Science Department \\
Cornell University \\
halpern@cs.cornell.edu
\And
Evan Piermont\\
Economics Department\\
Royal Holloway, University of London\\
evan.piermont@rhul.ac.uk}

\nocopyright
\maketitle
\begin{abstract}
We develop a modal logic to capture partial awareness. 
The logic has three building blocks: objects, properties, and
concepts.  Properties are unary predicates on objects;
concepts are Boolean combinations of properties. 
We take an agent to be partially aware of a concept if she is aware 
of the concept without being aware
of the properties that define it. The logic allows for 
quantification over objects and properties, so that the agent
can reason about her own unawareness. 
%
We then apply the logic to contracts, which we view as
syntactic objects that dictate outcomes based on the
truth of formulas.
We show that when agents are
unaware of some relevant properties, referencing concepts that agents
are only partially aware of can 
improve welfare. 
\end{abstract}

\section{Introduction}
Standard models of epistemic logic assume that agents are
\emph{logically omniscient}: they know all valid formulas and
logical consequences of their knowledge.  There have been many attempts
to find models of knowledge that do not satisfy logical omniscience.
One of the most common approaches involves \emph{awareness}.  Roughly
speaking, an agent $i$ cannot know a valid formula $\phi$ if $i$ is
unaware of $\phi$.  For example, an agent cannot know that either
quantum computers are faster than conventional computers or they are
not if she is not aware of the notion of quantum computer.

There have been many attempts to capture unawareness in the
computer science, economics, and philosophy literature, ranging from
syntactic approaches \cite{FH}, to semantic approaches involving
lattices \cite{HMS03}, to identifying the lack of awareness of $\phi$ with
an agent neither knowing $\phi$ nor knowing that she does not know
$\phi$ \cite{MR94,MR99}.  Most of the attempts involved propositional
(modal) logics, although there are papers that use first-order
quantification as well \cite{BC09,Sil08}.  However, none of these
approaches are rich enough to capture what we will call \emph{partial
unawareness}.   

Perhaps the most common interpretation of lack of awareness
identifies the lack of awareness of $\phi$ with the sentiment
``$\phi$ is not on my radar screen''.
With this interpretation, partial awareness becomes ``some aspects of
$\phi$ are on my radar screen''.%
\footnote{Lack of awareness of $\phi$ has also been identified with
  the inability of compute whether $\phi$ is true (due to
  computational limitations). 
  We do not consider this interpretation in this paper, but partial
  awareness makes sense for it as well---now partial awareness becomes
  ``I can compute whether some aspects of $\phi$ are true.''}
Consider an agent who is in the market for a
new computer.  She 
might be completely unaware of quantum computers, never having heard
of one at all.
Such an agent cannot reason about her value for having a
quantum computer, nor think about for which tasks a quantum computer
would be useful.  But this is an extreme case.
A slightly more
aware agent might be aware of (the concept of) quantum computers, having
read a magazine article about them. She might understand
some properties of quantum computers, for example, that they can factor
integers faster than a conventional computer, but be unaware of 
the notion of qubit state on which quantum computing is based. Such an
agent may well be able to reason about 
her value of a quantum computer despite her less then full awareness.

To capture such partial awareness more formally, we consider a logic
with three building blocks: \emph{objects}, \emph{properties}, and
\emph{concepts}.  
We take a property to be a unary predicate, so it denotes a subset of
objects in the domain;%
\footnote{We could easily extend our approach to allow arbitrary $k$-ary
  predicates, but this would complicate the presentation.  Restricting
  to unary predicates allows us to focus on the more interesting new
  conceptual issues.}  a concept is a Boolean combination of properties.
In each state (possible world), each
agent is aware of a subset of objects, properties, and concepts.

The use of concepts in the context of  awareness, which is (to the
best of our knowledge) original to this paper, is critical in our
approach, and is how we capture \emph{partial} awareness. For a simple example of how we use it, 
suppose that a quantum computer ($Q$) is defined as a
computer ($C$) that possesses an additional
``quantum property'' $\QP$.  That is, $Q$ is defined to be $C \land \QP$
(more precisely, we will have $\forall x (Q(x) \iff C(x) \land
\QP(x))$ as a domain axiom).
A ``partially aware'' agent might be aware
of the concept of a quantum computer but unaware of the specific
Boolean combination of properties that characterizes it.
Contrast this to the cases where the agent is fully unaware (she is unaware of even the concept of a quantum computer) or fully aware (she is aware of both the concept of a quantum computer and also what it means to be one, i.e., the properties $C$ and $QP$).

Once we have awareness in the language, we need to consider 
what an agent knows about her own awareness (and lack of it).  
This is critical in order 
to capture many interesting economic behaviors.
For example, an agent might know (or at least
believe it is possible) that there are aspects of a quantum computer
about which she is unaware
(in our example, this happens to be $\QP$).
In the spirit of Halpern and R\^{e}go \citeyear{HR05b,HR09} (HR from
now on), we
capture this using
quantification over properties: $K_i(\exists P \forall x (Q(x) \iff
C(x) \land 
P(x)))$; although $i$  is unaware of \emph{how} quantum
computers differ from conventional computers, she knows that there is
a distinction that is captured by some property $P$.  

While the agent is unaware of the property
$\QP$,
so cannot reason explicitly about it, 
she is aware that there is some property that relates $C$ and
$Q$. This allows her to 
reason at a sophisticated level about $\QP$.
For example, 
for an arbitrary property $R$,
the statement $K_i\big(\forall P \forall x ((Q(x) \iff C(x)
\land P(x) 
)\implies (P(x) \implies R(x) ) \big)$, combined with the
definition of quantum computer, implies
that the agent knows that $\QP$ implies $R$, even though she is
unaware of $\QP$.
Despite her (partial) unawareness of the notion of quantum computer, 
the agent can reach some substantive conclusions.

With unawareness,
an agent may in general be uncertain about the
relation between $C$ and $Q$; for example, she might also envision a
state where a quantum computer is a computer that satisfies one of two
properties, either $\QP$ or $\QP'$,
but
cannot articulate statements that 
distinguish $\QP$ from $\QP'$. So if the agent wishes to purchase a
$\QP$-computer but not a $\QP'$-computer, she cannot do so.  
This lack of awareness
has important consequences in market settings.

Consider a seller of quantum computers who is fully aware and 
has the ability to teach the buyer; specifically, he can expand the buyer's
awareness, allowing her to discriminate between $QP$ and $QP'$
computers. 
%
%
It is instructive to compare the case of unawareness with the more
standard case of uncertainty 
(with full awareness) in this setting.
In environments of pure
uncertainty, the buyer and seller
are assumed to have a common (and fully understood) space of uncertainty, where each
state fully resolves all payoff-relevant uncertainty.
That is, although they may have uncertainty, both the buyer and
seller understand exactly what information is required in order to resolve the
uncertainty.  A state contains all the relevant information, so, given
a state, both the buyer and seller can (at least in principle) place a
price on all the relevant options.

Suppose that we model the example above using 
two states: $s$, in which $c$ is a
$\QP$-computer, and $s'$, in which $c$ is $\QP'$-computer. If the buyer
knows that the seller knows the state, and the seller can reveal his information in a credible way,
then we can assume without loss of generality that he will always do
so and a transaction will take 
place only in state $s$. To see why, note that in state $s$,
the seller's  
dominant strategy is to reveal his information, ensuring a sale. 
Because 
the buyer knows that the seller knows the state, she will interpret no
information as 
a signal that the state is $s'$.  Thus, in either case the state is
revealed.\footnote{This argument does not rely on the fact that there
  are only two states.
  Suppose that there are $n$ states, say $s_1, \ldots, s_n$,
  and the buyer is willing to pay $p_i$ for the computer if the true
  state is $s_i$,  
   with $p_1 \ge p_2 \ge \ldots \ge p_n$.
   Assume that the seller is willing to sell at any positive
  price.  An easy induction on $k$ shows that, without loss
  of generality, if the true state is $s_k$, the seller might as well
  reveal this fact, as long as the buyer puts positive probability on
  a state $k' < k$.
  (A formalization of this argument also requires common knowledge of
  rationality, or, more precisely, sufficiently deep knowledge of
  rationality.)  
}


If the buyer does not know whether the seller
knows the state, or if the seller cannot credibly reveal the state,
the argument above fails; receiving no information could plausibly be
the consequence of an uninformed seller. If and when information is
revealed or a transaction takes place now depends on the beliefs of 
the agents.
However, an enforceable contract 
can remedy the situation: a contract that stipulates the
sale of the computer
conditional on the true state being $s$ results in essentially the
same 
outcome as the case where the
buyer knows that the seller
knows the true state of the world:
the buyer ends up with the computer if and only if the state is $s$. 
The fact that the
buyer and seller agree on the underlying state space (i.e., the set of
possible states of the world)
makes it possible for an enforceable contract to overcome information
asymmetries.   

We now turn to the situation with unawareness.
If the buyer knows that the seller is aware of all the
intricacies of his product, then 
an analogous argument to the one above shows that we get the same outcome as in
the case of uncertainty.
That is, if the buyer believes the seller is himself aware of the
distinction between $\QP$ and $\QP'$, then she could decide to purchase
a computer $c$ only if he teaches her about the properties and if $\QP(c)$ is true.

On the other hand, if the buyer does not know what the seller is aware
of, then we get a significant divergence between the situation with
awareness and uncertainty.
With unawareness, the seller will again not volunteer information, but
the buyer \emph{cannot} 
draw up a contract  
guaranteeing her the product she wants, since she
cannot articulate the difference between the states.
The efficacy of contracts relies
critically on the parties' common knowledge of the state space,
which in general does not hold in the presence of unawareness.

Note the critical role of the ``partialness'' of awareness here.
If the buyer were fully aware of
the concept, in the sense of her being aware of the
properties that define it, she could write the relevant contracts and
we would have a  case of pure uncertainty. On the other hand, if she
were completely unaware of quantum computers,
she would not be able to reason
about her value, or even consider buying one.


\fullv{
The introduction of concepts allows us to consider agents with
different levels of awareness. For example, perhaps a buyer becomes
aware of a particular company that is offering a commercial quantum
computer understood to be characterized by the concept $T$. If the buyer
knows that $T$ is such that $\forall x\big( T(x) \iff C(x) \land
QP'(x)\big)$, then the buyer, without being explicitly aware of $QP$
or $QP'$, can nonetheless articulate her desire to purchase a $QP$ but
not a $QP'$ computer. Indeed, the buyer can write a contract that
gives her the right to return the computer $c$ in the event that
$T(c)$ is true. In other words, the concept $T$ acts as a proxy for
the property $QP$, allowing the buyer to circumvent her scant
awareness.

Note that the observations above help to explain the prevalence
of costly litigation and contractual disputes. In the world of pure
uncertainty, it can be shown that contracts are always upheld in
equilibrium.  Indeed, if 
uncertainty resolved in such a way as to make some party renege, then
this could be foreseen, and could be addressed by an appropriate contract,
avoiding
costly litigation. However, when parties are aware of different concepts,
optimal complete contracts cannot be drawn up, setting up a barrier to
efficient trade. A legal system that punishes the strategic
concealment of information can help to facilitate trade, as it
provides recourse for unaware buyers who get swindled.
}

\section{A logic of partial awareness}
In this section, we introduce our logic of partial awareness.

\subsection{Syntax}

The syntax of our logic has the following building blocks:
\begin{itemize}
\item A countable set $\O$ of
constant symbols, representing objects.
Following Levesque \citeyear{Lev5}, we assume that $\O$ consists of 
a nonempty set of \emph{standard names} $d_1, d_2, \ldots$, which
may be finite or countably infinite.%
\footnote{Levesque required there to be infinitely many standard names.}
Intuitively, the standard names will represent the domain elements.
We explain the need for these shortly.
\item A countably infinite set $\V^{\O}$ 
  of object variables, which range over objects.
\item A countable set $\P$ of 
unary predicate symbols.
\item    A countably infinite set $\V^{\P}$ of predicate variables.
\item A countable set $\C$ of concept symbols.
\end{itemize}

%
If $d \in \O$, $x \in \V^{\O}$, $P \in
\P$, $Y \in  \V^{\P}$, and $C \in \C$, then $P(d)$,  $P(x)$, $Y(d)$,
$Y(x)$, $C(d)$, and $C(x)$  are
\emph{atomic formulas}.
\journal{they say that the object 
represented by $d$  (resp., $x$) is an instance of the property
represented by $P$ (resp., the property represented by $Y$, the
concept represent by $C$).}
%
Starting with these atomic formulas,
we construct the set of all formulas recursively:
As usual, the set of formulas is closed under conjunction and
negation, so if $\phi$ and $\psi$ are formulas, then so are
$\neg \phi$ and $\phi \land \psi$.
We allow quantification over objects and over unary predicates, so
that if $\phi$ is a formula,  $x \in \V^{\O}$, and $Y \in \V^{\P}$, then
$\forall x \phi$ and $\forall Y \phi$ are formulas.  Finally,
we have two families of modal operators: taking $\{1 \ldots n\}$
to denote the set of agents, we have modal operators $A_1, \ldots,
A_n$ and $K_1, \ldots, K_n$, 
representing awareness and (explicit) knowledge,
respectively.
Thus, if $\phi$ is a formula, then so is $A_i \phi$ and $K_i \phi$.
Let $\L(\O,\P,\C)$ denote the 
resulting language.
A formula that contains no free variables is called a \emph{sentence}. 

\commentout{
A \emph{simple Boolean sentence} is 
a Boolean 
combination of formulas of the form $P(a)$ where $P \in \P$ and $ \in
\V^\O$ that mentions only a single object variable. 
For example $P(x) \lor \neg Q(x)$ is a simple Boolean
formula, whereas
$P(x) \lor Q(y)$, 
$P(x) \lor \neg C(x)$, $P(x) \lor \neg \E_i P(x)$,
and $\forall Y(Y(x) \lor \neg Q(y))$ are not. Let $\L^{sb}$ denote
all such formulas. If $\phi(x) \in \L^{sb}$, let
$\phi(a)$ (respectively, $\phi(y)$) denote the
formula where each instance of $x$ in $\phi(x)$ is replaced by $a \in
\O$ (resp. $y 
\in \V^\O$). For $\mathcal{Q} \subseteq \P$, let 
$\L^{sfb}(\mathcal{Q})$ be the set of simple Boolean sentences that
refer only to properties in $\mathcal{Q}$.
}

\commentout{
A \emph{simple Boolean sentence} is a Boolean 
combination of formulas of the form $P(d)$ where $P \in \P$ and $d \in
\O$ that mentions only a single object. 
For example $P(d) \lor \neg Q(d)$ is a simple Boolean
sentence, whereas
$P(d) \lor Q(d')$, 
$P(d) \lor \neg C(d)$, $P(d) \lor \neg \E_i P(d)$,
and $\forall Y(Y(d) \lor \neg Q(d))$ are not. Let $\L^{sb}$ denote
all such formulas. If $\phi(d) \in \L^{sb}$, let
$\phi(d')$ (respectively, $\phi(x)$) denote the
formula where each instance of $d$ in $\phi(d)$ is replaced by $d' \in
\O$ (resp. $x \in \V^\O$). For $\mathcal{Q} \subseteq \P$, let 
$\L^{sb}(\mathcal{Q})$ be the set of simple Boolean sentences that
refer only to properties in $\mathcal{Q}$.
}


\subsection{Semantics}

A model over the language  $\L(\O,\P,\C)$ has to give meaning to each
of the syntactic elements in the language.  We use the standard
possible-worlds semantics of knowledge.  Thus, a model includes 
a set $\Omega$ of possible \emph{states} or \emph{worlds} (we use the two words
interchangeably) and, for each agent $i$, a binary relation
$\K_i$ on worlds.  The intuition is that $(\omega, \omega') \in \K_i$
(sometimes denoted $\omega' \in \K_i(\omega)$) if, in world $\omega$,
agent $i$ considers $\omega'$ possible.

Following HR,
we assume that each state $\omega$ is associated with a language.
Formally, there is a function $\Phi$ on states such that $\Phi(\omega)
= (\O_\w, \P_\w, \C_\w)$, where 
$\O_\w = \O$, $\P_\w \subseteq \P$, and $\C_\w \subseteq \C$.
We discuss the reason for associating a language with each state 
below.  Let $\L(\Phi(\w))$ denote the language associated with state $\w$.
We also assume that associated with each state $\omega$
and agent $i$, there is the set of constant, predicate, and concept symbols
that the agent is aware of; this is given by the function $\A$.
At state $\w$, each agent can only be aware of symbols that are in
$\Phi(\w)$.   
Thus,
$\A_i(\w) \subseteq \Phi(\w)$.   We assume that
all agents are aware of the standard names at every state, so that
$\A(\w)$ includes $\O$.  
\commentout{
Then $\L(\Phi(\w))$ is the language of state $\w$. 
We follow Halpern and R\^ego \citeyear{HR09}

Agent $i$'s awareness of objects, predicates, and concepts in each
state is characterized by the function $\A_i$, 
where $\A_i(\w)$ is a subset of $\Phi(\w)$. Agent $i$ is aware of a
formula $\phi$ (in state $\w$) if $\phi \in \L(\A_i(\w))$. 
Agent $i$'s
knowledge is characterized by the function $\K_i$, which
associates with each state a subset of states. Agent $i$ considers the
state $\w'$ is possible (in state $\w$) if $\w' \in \K_i(\w)$. 
}

Like Levesque \citeyear{Lev5}, we take the domain $D$ of a model over
$\L(\O,\P,\C)$ to consist of the standard 
names in $\O$.
An interpretation $I$
assigns meaning to the constant and predicate symbols in each state;
more precisely, for each state $\w$, we have a function $I_{\w}$ 
taking $\O$ to elements of the domain $\OO$,
$\P$ to subsets of $\OO$,
and $\C$ to Boolean combinations of properties (i.e., predicates).
This last item requires some explanation.  Although elements of $\O$
and $\P$ are mapped to semantic objects (elements in the domain and
sets of elements in the domain, respectively), elements of $\C$ are
mapped to \emph{syntactic} objects: Boolean combinations of
properties.  Let $\Lbc$ denote the Boolean combination of properties;
if $\P' \subseteq \P$, let $\Lbc(\P')$ denote the Boolean combination
of properties in $\P'$.  We require that
$\I_\w(C) \in \Lbc(\Phi(\w))$, so that the Boolean combination
defining $C$ in state $\w$ must be expressible 
in $\L(\Phi(\w))$, the language of $\w$. 
%
We sometimes write $c^I_\w$ rather
than $I_{\w}(c)$, $P^I_\w$ rather than $I_{\w}(P)$,
and $C^I_\w$ rather than $I_{\w}(C)$.
We assume that standard names are mapped to themselves, so that
$(d_i)^I_w = d_i$.

Putting this together, a model for partial awareness has the form
$$M = ( \W, \OO, \Phi, \A_1 \ldots, \A_n, \K_1, \ldots, \K_n, I).$$

\commentout{
Finally, a \emph{valuation} $V$ is a function that assigns meaning
to the object and predicate variables: for each $x \in \V^\O$, we have
$x^V \in \OO$, so that object variables are assigned to elements of the
domain (i.e., standard names); and for
each $Y \in \V^\P$, $Y^V \in \L^{sb}(\P)$, 
so that predicate variables are assigned quantifier-free formulas.



The truth of a sentence $\phi \in \L(\O,\P,\C)$ at a state $\w$ 
in $M$, given a valuation $V$,
is defined recursively as follows. 
%
\begin{itemize}
\item $(M,\w,V) \models P(a)$  
 iff $P(a) \in \L(\Phi(\w))$ and $a^I_\w \in P^I_\w$,
\item $(M,\w,V) \models P(x)$ 
iff $P(x) \in \L(\Phi(\w))$ and $x^V \in P^I_\w$,
\item $(M,\w,V) \models \neg \phi$
iff $\phi \in \L(\Phi(\w))$ and $(M,\w, V) \not\models \phi$,
\item $(M,\w,V) \models (\phi \wedge \psi)$ 
iff $(M,\w, V) \models \phi$  and $(M,\w, V) \models \psi$,
\item $(M,\w,V) \models Y(a)$
iff $(M,\w, V) \models Y^V(a)$,
\item $(M,\w,V) \models C(a)$
 iff $C(a)\in \L(\Phi(\w))$ and $(M,\w, V) \models C^I_\w(a)$,
\item $(M,\w,V) \models C(x)$
  iff $C(x)\in \L(\Phi(\w))$  and $(M,\w, V) \models C^I_\w(x)$,
\item $(M,\w,V) \models \forall x \phi$
  iff $(M,\w, V) \models \phi[x/d]$  for all constant symbols $d \in
  \O$,
   where $\phi[x/d]$ denotes the result of replacing all
  free occurrences of $x$ in $\phi$ by $d$,
\item $(M,\w,V) \models \forall Y \phi$
iff $(M,\w, V) \models \phi[Y/\psi]$, where   $\psi(d) \in
 \L^{sb}(\Phi(\w))$ for each constant $d$, 
\item $(M,\w,V) \models A_i\phi$
 iff $\phi \in \L(\A_i(\w))$,
\item $(M,\w,V) \models \E_i \phi$
 iff $(M,\w,V) \models A_i\phi$  and $(M,\w',V) \models \phi$
for all $\w' \in \K_i(\w)$.
\end{itemize}
}

The truth of a sentence $\phi \in \L(\O,\P,\C)$ at a state $\w$ 
in $M$
is defined recursively as follows. 
%

is clearly efficient and acceptable to all parties.
\blacksquare
}
\end{ex}


Example~\ref{ex:quantum1}  highlights how contracting can facilitate trade in uncertain
environments. Despite the fact that agents do not know which state
has obtained, they can eliminate uncertainty by appealing to
contracts. The next example 
illustrates the issues that arise when awareness is limited.  

\begin{ex}\label{ex:quantum2}
{\rm
Let $M$ and $\U$ be as in Example~\ref{ex:quantum1}, except that
  now that agents are not 
  completely aware.  Specifically, 
$\A_{i}(\w) = (\O,\{P,R\},\C)$ for all $\w \in \W$ and $i \in \{1,2\}$.
Both agents are unaware of $Q$.
This model satisfies assumptions
A1 and A2, but
not A3.  
The contract described in the previous example is no longer articulable.
We can circumvent the agents' linguistic limitations by writing a
contract in terms of the concept $QC$. Indeed, the consider
$(\Lambda,c)$, where 
$\Lambda = \{P(d^{cmp}) \land Q(d^{cmp}), \neg (P(d^{cmp} \land
Q(d^{cmp}))$ and $c$ is given by
\begin{align*}
P(d^{cmp}) \land QC(d^{cmp}) &\mapsto (d^{cmp},d^\$) \\
\neg \big(P(d^{cmp}) \land QC(d^{cmp})\big) &\mapsto (d^\$,d^{cmp}),
\end{align*}
implements the same consumption outcomes as the contract in
Example~\ref{ex:quantum1}. 
}
\blacksquare
\end{ex} 


In Example~\ref{ex:quantum2}, the buyer wants to purchase $d^{cmp}$ only when $Q(d^{cmp})$ is
true.  Since she is unaware of the property $Q$, and
knows only that there is some property (that is a conjunct of $QC$) that is
desirable, 
she cannot \emph{directly} demand a computer with
property $Q$. Before analyzing the contract above,
notice if the buyer knew the true state was not $\w_3$, then
she could get away with the simple contract that demands $d^{cmp}$ whenever
$QC(d^{cmp})$ is true. In states $\w_1$ and $\w_2$, the interpretation of
$QC$ is constant, and, given that $P(d^{cmp})$ is true in both states, $QC(d^{cmp})$ is
equivalent to $Q(d^{cmp})$ in these states, so the buyer could use the
concept of a quantum 
computer as a proxy for the property $Q$. 

This simpler contract is not acceptable when the buyer considers all
three states possible. In state $\w_3$, the interpretation of a
quantum computer is different, so that while $d^{cmp}$ is an instance of
$QC$ in $\w_3$, it does not satisfy $Q$. The buyer is uncertain about the
definition of a quantum computer; while she is unaware of the exact
definition in each state, she can articulate the difference: in some
states, $P$ is a property of quantum computers, while in others it is
not.  By exploiting this difference, she can construct the welfare-optimal
contract---she demands $d^{cmp}$ whenever it possesses the property that
defines a quantum computer in addition to $P$.  

\commentout{
\begin{ex}
{\rm
  Let $M$ and $\U$ be as in Example 2, except for the interpretation
  of $QC$ in state $\w_1$.  We now set 
$QC^I_{\w_1} = QC^I_{\w_3} = \neg P \land \neg Q \land \neg R$. This model satisfies
A1, but not A2 or A3. To see
that it violates A2, note that in both
$\w_1$ and $\w_2$, $c$ satisfies $P$ but not $R$ nor $QC$;
also, in both states $\$$ satisfies $R$, but not
$P$ or $QC$. Therefore, if
A2 held, the preference ranking
between $c$ and $\$$ would be the same in both states, which it is not. 

It is not hard to see that there is no sentence $\phi$ such that (a)
some agent is aware of $\phi$ in both $\w_1$ and $\w_2$ and (b)
$\phi$ is true in one of these states and false in the other. We can
conclude that
there is no efficient contract: any contract that stipulates the sale
of $c$ in state $\w_1$ must do the same in state $\w_2$.
\blacksquare
}
\end{ex} 

}

These examples show how the awareness of agents can affect the set
of trading outcomes that can be implemented via (syntactic)
contracts. Collectively, they suggest a connection between the
efficacy of contracting and the relationship between preference and
properties as embodied by assumptions A1--A3.
Under the additional assumption that 
the agents are aware of the same things in the actual world (a
reasonable assumption if we assume that the language talks only about
contract-relevant features, and both agents have read the contract, so
are aware of all the properties and concepts that the contract mentions), this
connection is made formal in the following result, whose proof (like
that of all other theorems) is left to the full paper, which can be
found on arxiv.

\begin{thm}\label{thm:contract1}
Given a model $M = ( \W, \OO, \Phi, \A_1 \ldots, \A_n, \K_1,
\ldots, \K_n, I)$, preferences $\U$, endowments  $\End_1,\End_2 
\subseteq \OO$, and a state $\w^* \in \W$ such that $\A_1(\w^*) =
\A_2(\w^*)$
and $\K_1(\w^*) \cup \K_2(\w^*)$ is finite,
the following hold: 
\begin{enumerate}
\item[(a)] If $\<M,\U\>$ satisfies A1 and A2, then
there exists a contract that is $\w^*$-articulable, $\w^*$-efficient,
  and $\w^*$-acceptable for $i = 1,2$.
 \item[(b)] If in addition, for $i=1,2$, $\A_i$ is constant on $\K_1(\w^*) \cup \K_2(\w^*)$, then
there exists a contract that is $\w^*$-articulable, $\w'$-efficient,
  and $\w'$-acceptable for all 
  $\w' \in \K_1(\w^*) \cup \K_2(\w^*)$.  
\item[(c)] If in addition $\<M,\U\>$ satisfies A3,
then there exists a contract $\<\Lambda, c\>$ that is
$\w^*$-articulable, $\w'$-efficient, 
and $\w'$-acceptable for $i$ at all
$\w' \in \K_1(\w^*) \cup \K_2(\w^*)$ such that   
$\Lambda \subseteq \Lbc$.  
\end{enumerate}

\end{thm}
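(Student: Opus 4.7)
The plan is to build a contract whose sentences partition the finite set $S = \K_1(\w^*) \cup \K_2(\w^*)$ into $\A^*$-indistinguishable classes, where $\A^* := \A_1(\w^*) = \A_2(\w^*)$. By introspection $\A_i(\w') = \A^*$ for every $\w' \in \K_i(\w^*)$, so at each $\w' \in S$ at least one agent has awareness exactly $\A^*$. Define $\w \sim \w'$ on $S$ by $\prop^{\A^*}_\w(d) = \prop^{\A^*}_{\w'}(d)$ and $\con^{\A^*}_\w(d) = \con^{\A^*}_{\w'}(d)$ for every $d \in \End_1 \cup \End_2$. Because $S$ and $\End_1 \cup \End_2$ are finite, $\sim$ has finitely many classes $E_1, \ldots, E_k$, and only a finite collection of atomic formulas $P(d)$ and $C(d)$ (with $P \in \A^* \cap \P$, $C \in \A^* \cap \C$) is needed to separate them, even if $\A^*$ itself is infinite. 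For each $E_j$ let $\phi_j$ be the conjunction of these atomic formulas signed according to the profile of $E_j$, and let $\phi_0 = \neg \bigvee_j \phi_j$; then $\Lambda = \{\phi_0, \phi_1, \ldots, \phi_k\} \subseteq \L(\A^*)$ is pairwise disjoint and exhaustive, and $\w^*$-articulability is immediate from $\A_1(\w^*) \cap \A_2(\w^*) = \A^*$.

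To define $c$, let $E_{j^*}$ be the class containing $\w^*$ and take $c(\phi_{j^*})$ to be a Pareto-optimal and individually rational (``core'') allocation in $\End_1 \oplus \End_2$ with respect to $(U_{1,\w^*}, U_{2,\w^*})$; this exists because the no-trade pair is IR, $\End_1 \oplus \End_2$ is finite, and any chain of Pareto improvements must terminate. For each other $E_j$ and each agent $i$ with $E_j \cap \K_i(\w^*) \neq \emptyset$, pick any $\w_j^i \in E_j \cap \K_i(\w^*)$ and set $c_i(\phi_j) = \argmax_{d \in \End_i} U_{i,\w_j^i}(d)$; if that intersection is empty, set $c_i(\phi_j)$ arbitrarily. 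Set $c(\phi_0)$ to no-trade. Part (a) then follows: $\w^*$-efficiency is immediate from the choice on $E_{j^*}$, and for $\w^*$-acceptability to agent $i$, any $\w' \in \K_i(\w^*)$ lies in some $E_j$ with $\A_i(\w') = \A^*$, so A2 combined with $\A^*$-constancy on $E_j$ gives $U_{i,\w'}(\cdot) = U_{i,\w_j^i}(\cdot)$ on $\End_i$, making $c_i(\phi_j)$ an IR choice for $i$ at $\w'$.

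For part (b), the added hypothesis $\A_i \equiv \A^*$ on $S$ for both $i$ extends the A2 argument to equate $U_{i,\w'}$ and $U_{i,\w_j}$ across $E_j$ for both agents simultaneously; I therefore redefine $c(\phi_j)$ as a core allocation at any representative of $E_j$, which is then simultaneously $\w'$-efficient and $\w'$-acceptable for $i=1,2$ at every $\w' \in E_j$. For part (c), A3 eliminates concepts from the utility-determining data, so I would coarsen $\sim$ to require only $\prop^{\A^*}$-agreement; each $\phi_j$ can then be written as a Boolean combination of formulas $P(d)$ with $P \in \A^* \cap \P$, placing $\Lambda \subseteq \Lbc$, and the previous reasoning transfers verbatim. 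The main subtleties are (i) ensuring that only finitely many $\A^*$-atomic formulas are needed to separate the classes when $\A^*$ itself may be infinite---this uses finiteness of $S$ and $\End_1 \cup \End_2$---and (ii) in part (a), avoiding any utility claim about agent $i$ at states $\w' \in S \setminus \K_i(\w^*)$ where $\A_i(\w')$ may strictly exceed $\A^*$, by invoking A2 only at states the agent actually considers possible.
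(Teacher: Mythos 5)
Your proposal is correct and takes essentially the same route as the paper's proof: partition the finitely many relevant states by what the (commonly aware) agents can distinguish, implement any cell-measurable allocation by a finite $\w^*$-articulable contract with mutually exclusive, exhaustive clauses plus a catch-all, use A2 (resp.\ A3, after coarsening to property-only formulas for part (c)) to make each $U_{i,\cdot}$ constant on cells within $\K_i(\w^*)$, and pick individually rational, Pareto-optimal allocations cell by cell. The only differences are implementation details that do not change the argument: the paper's equivalence relation requires agreement on \emph{all} sentences the agents are aware of at $\w^*$ and separates cells with arbitrary distinguishing sentences, whereas you use the coarser (and equally sufficient, since it is exactly the A2-relevant data) relation of agreement on atomic $P(d)$/$C(d)$ profiles of endowment objects; and in part (a) the paper's allocation rule maximizes the sum of utilities subject to individual rationality at \emph{every} cell, whereas you do so only at $\w^*$'s cell and default to best-own-endowment elsewhere, which still yields $\w^*$-efficiency and $\w^*$-acceptability.
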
   


Theorem~\ref{thm:contract1}(a) says that if agents' preferences can
depend only on properties and concepts that they are aware of,
then gains from trade can be fully realized. 
Even if agents are unaware of some
preference-relevant properties, as long as they do not strictly prefer
one object to another without being aware of some tangible way that the
objects differ, then they can still
articulate an optimal contract. As Example~\ref{ex:quantum2} shows,
this  contract might need to mention concepts.
Part (b) states that if, in addition, each agent knows what
the other is aware of, then each of them knows
that gains from trade can be achieved.
That is, both agents know that, no matter what the true state of the
world is (from their perspective), trading is worthwhile.
%
Theorem~\ref{thm:contract1}(c) says that if agents' preferences 
depend only on the properties that they are aware of, then they gain
nothing from the ability to contract over concepts; there is a
contract that they know to be efficient that makes reference only to
properties.

Theorem~\ref{thm:contract1} requires agents to be aware of the same
properties in $\w^*$.  As we argued above, this is a reasonable
assumption.  
As seen by the following example, the assumption is also necessary.

\begin{ex}
{\rm
  Let $\End_1 = \{d_1\}$ and $\End_2 = \{d_2\}$.  Consider a language with three
  predicate symbols $P$, $Q$, and $R$.  Let $M$ be a model with three
  states, $\w_1$, $\w_2$, and $\w_3$, where 
  $P^I_{\w_1} = \{d_1\}$, $Q^I_{\w_3} = \{d_2\}$, $P^I_{\w_2} = P^I_{\w_3}
  = Q^I_{\w_1} = Q^I_{\w_2} = \emptyset$, $R^I_\w = \{d_1\}$ for all
  states $\w$, 
  the only information set for both agents is $\{\w_1,\w_2,\w_3\}$ (so
  both agents consider all three worlds possible at all worlds),
$\A_{1}(\w) = (\O,\{P,R\},\emptyset)$, and $\A_{2}(\w) =
(\O,\{Q,R\},\emptyset)$ for each state $\w$. Now consider what happens when
agent 1 wants $d_1$ only when it has property $P$ (so wants to trade in
states $\w_2,\w_3$) and agent 2 wants $d_2$ only when it has property $Q$
(so wants to trade in states $\w_1$ and $\w_2$).  
%
Thus, an efficient contract must induce trade in state $\w_2$ and an
acceptable contract cannot induce trade except in state
$\w_2$. However, neither agent alone can propose such a contract. From
agents 1's perspective, states $\w_2$ and $\w_3$ are
indistinguishable in the sense that all objects that he is aware of
satisfy the same properties in both states, and from 2's perspective,
$\w_1$ and $\w_2$ are 
indistinguishable.  
\blacksquare
}
\end{ex}

\section{Axiomatization and complexity}

We can adapt the axioms used by Halpern and R\^{e}go \citeyear{HR09}
to get a sound and complete axiomatization for our logic,
provided that the set $\P$ of predicates is infinite.
This assumption seems reasonable, given that we are mainly interested
in agents who are never sure that they are aware of all predicates.
(HR make an analogous assumption.)  

Consider the following axiom system, which we call AX.
\smallskip

\noindent{{\bf Axioms:}}
\begin{description}
\item[{\rm Prop.}] All substitution instances of valid formulas of
propositional logic.

\item[{\rm AGP.}] \mbox{$A_i\phi \dimp
 (\land_{P\in \P \cap \Phi(\phi)} A_i P(x)) \land 
  (\land_{C\in \C \cap \Phi(\phi)} A_i C(x))$}, where $x$ is an
  arbitrary object variable  
  and $\Phi(\phi)$ consists of all 
  the predicate and concept symbols in $\P \cup \C$
that appear in $\phi$.%
\footnote{As usual, the empty conjunction is taken to be
vacuously true, so that $A_i \phi$ is vacuously true
if there are no symbols in $\P \cup \C$ occur in $\phi$.}

\item[{\rm KA.}] $A_i \phi \rimp K_i A_i \phi$


\item[{\rm K.}] $(K_i\varphi\land K_i(\varphi\rimp\psi))\rimp
K_i\psi$.

\item[{\rm T.}] $K_i\varphi\rimp \varphi$.

\item[{\rm 4.}] $K_i\varphi\rimp K_iK_i\varphi$.

\item[{\rm 5.}] $(\neg K_i\varphi \land A_i \phi) \rimp K_i\neg K_i\varphi$.

\item[{\rm A0.}] $K_i\varphi\rimp A_i\varphi$.

\item[{\rm Con.}] $\exists X (\forall x (C(x) \iff X(x)))$.
 

\item[{\rm $1_{\forall_x}$}] $\forall x \psi \rimp \psi[x/c]$ for
$c \in \O$. 

\item[{\rm $1_{\forall_X}$.}] $\forall X\varphi\rimp\varphi[X/\psi]$
if $\psi$ is either in $\Lbc$ or a concept.

%

\item[{\rm ${\rm K}_{\forall x}$.}] $\forall x(\varphi\rimp\psi)\rimp (\forall
x \varphi\rimp\forall x\psi)$.

\item[{\rm ${\rm K}_{\forall X}$.}] $\forall X(\varphi\rimp\psi)\rimp (\forall
X \varphi\rimp\forall X\psi)$.

\item[{\rm ${\rm N}_{\forall x}$.}] $\varphi\rimp\forall x\varphi$ if
  $x$ is not free in $\varphi$.

\item[{\rm ${\rm N}_{\forall X}$.}] $\varphi\rimp\forall X\varphi$ if
  $X$ is not free in $\varphi$.

\item[{\rm Barcan$_x$.}] $\forall x K_i\varphi\rimp K_i\forall
  x\varphi$.

\item[{\rm Barcan$_X$.}] $(A_i(\forall X \phi) \land \forall X
  (A_i(X(c))) \rimp K_i\varphi) \rimp K_i(\forall X A_i(X(c)) \rimp
  \forall X  \varphi)$.

\item[{\rm FA$_X$.}] $\forall X \neg A_i (X(c)) \rimp K_i(\forall X
  \neg A_i(X(c)))$. 

\item[{\rm Fin$_x$.}]
  If $\O = \{c_1, \ldots, c_n\}$,
  then $\forall x \phi \dimp 
  \phi[x/c_1] \land \ldots \land \phi[x/c_n]$.


\end{description}

\noindent {\bf Rules of Inference:}

\begin{description}  
\item[{\rm MP.}] {F}rom $\varphi$ and $\varphi\rimp\psi$ infer $\psi$
(modus ponens).

\item[{\rm Gen$_K$.}] {F}rom $\varphi \land A_i \phi$ infer $K_i \varphi$.
\item[{\rm Gen$_{\forall_x}$.}]
From $\phi$ infer $\forall x \varphi[c/x]$, where $c \in \O$.

 \item[{\rm Gen$_{\forall_X}$.}]
If $P$ is a predicate symbol, then from $\phi$ infer $\forall
X \varphi[P/X]$.

\end{description}

\begin{thm}\label{thm:completeness} AX is a sound and complete axiomatization of
  $\L(\O,\P,\C)$ with respect to the class of models of partial
  awareness,
if $\P$ is infinite.
  \end{thm}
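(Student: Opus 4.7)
The plan is to adapt the Henkin-style completeness proof of Halpern and R\^ego \citeyear{HR09} to our richer setting with objects, predicates, and concepts, and different state-dependent languages.

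\textbf{Soundness.} I would verify each axiom directly against the semantics. Most axioms (Prop, K, T, 4, 5, A0, the first-order/Barcan/generalization axioms) are standard modulo the requirement that validity is only checked at states whose language contains the formula, so the main bookkeeping is to confirm that in every case where we apply an axiom to $\phi$, the formula $\phi$ lies in $\L(\Phi(\w))$. The genuinely novel checks are: AGP (follows from $\A_i(\w)\subseteq \Phi(\w)$ and the clause $(M,\w)\models A_i\phi \iff \phi\in\L(\A_i(\w))$, combined with the fact that $\L$ is generated by its atomic symbols); KA and FA$_X$ (both follow from the standing assumption that $\A_i(\w)=\A_i(\w')$ whenever $\w'\in\K_i(\w)$); Con (follows because $I_\w(C)\in\Lbc(\Phi(\w))$, so $C$ is semantically a Boolean combination of predicates, witnessed by taking $X$ to be that very combination under axiom $1_{\forall_X}$); and $1_{\forall_X}$ itself, which is justified by our syntactic clause $(M,\w)\models\forall Y\phi$ iff $(M,\w)\models\phi[Y/\psi]$ for all $\psi\in\Lbc(\Phi(\w))$ (with concepts subsumed since $I_\w(C)\in\Lbc(\Phi(\w))$).

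\textbf{Completeness (canonical-model construction).} I would show that every AX-consistent sentence is satisfiable. First, expand the language by a countable reservoir of fresh predicate symbols (using the hypothesis that $\P$ is infinite) to serve as Henkin witnesses: for every sentence of the form $\neg\forall X\phi$, some fresh $P$ will satisfy $\neg\phi[X/P]$. Then build a canonical model $M^c$ whose states $\w$ are maximal AX-consistent sets of sentences satisfying the Henkin property for both $\forall x$ and $\forall X$. Define $\Phi(\w)$ to be the set of predicate and concept symbols that ``occur meaningfully'' at $\w$ (i.e., those appearing in some sentence of $\w$ that is not forced into the language by an external axiom), and define $\A_i(\w)$ to be the subset of $\Phi(\w)$ consisting of symbols $s$ such that $A_i(s(x))\in\w$ (this is well-defined by AGP). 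Take $\K_i(\w)=\{\w':\{\phi:K_i\phi\in\w\}\subseteq \w'\}$; axioms T, 4, 5 give that this is an equivalence relation on the relevant subdomain, and KA together with FA$_X$ give that $\A_i$ is $\K_i$-invariant. For the interpretation, put $d\in P^I_\w$ iff $P(d)\in\w$, and for each concept $C$ use axiom Con plus the Henkin witness to select a specific $\psi_C^\w\in\Lbc(\Phi(\w))$ such that $\forall x(C(x)\iff \psi_C^\w(x))\in\w$, and set $I_\w(C)=\psi_C^\w$.

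\textbf{Truth lemma and main obstacle.} The key step is the Truth Lemma: $(M^c,\w)\models\phi$ iff $\phi\in\w$, for every $\phi\in\L(\Phi(\w))$. This proceeds by induction on structure. The propositional, knowledge, and first-order $\forall x$ cases are routine from the canonical construction and the Henkin property (using Fin$_x$ if $\O$ is finite and the standard argument otherwise). The concept case $C(d)$ reduces to the predicate case through the chosen Henkin witness $\psi_C^\w$. The $A_i$ case follows from AGP and the definition of $\A_i(\w)$. The hard step, and the main obstacle, is the $\forall X$ case, because the semantic clause ranges only over $\psi\in\Lbc(\Phi(\w))$ while the $1_{\forall_X}$ instance allows substitution by any Boolean combination (or concept) in the full language. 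Here is where Barcan$_X$ and FA$_X$ do the work: together with the Henkin witnesses reserved from the expanded $\P$, they let one show that if $\phi[X/\psi]\in\w$ for all $\psi\in\Lbc(\Phi(\w))$, then $\forall X\phi\in\w$, because any potential counterexample can be pulled back to a fresh witness predicate that already belongs to $\Phi(\w)$. Finally, with the Truth Lemma, consistency of $\phi$ gives a state $\w$ in $M^c$ with $\phi\in\w$, hence $(M^c,\w)\models\phi$, completing the proof.
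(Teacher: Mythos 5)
Your overall strategy (canonical model, witnesses for predicate quantification, Con plus maximality to interpret concepts) is in the same family as the paper's proof, and your treatment of concepts via Con matches the paper's Lemma for interpreting $C$. But there are two genuine gaps. The first is in the canonical-model construction. You build maximal AX-consistent sets with respect to the full (expanded) language and then try to recover $\Phi(\w)$ as the symbols that ``occur meaningfully'' in $\w$. This is not well defined and cannot be repaired: a set that is maximal with respect to the full language decides every sentence, so it contains $P(d) \lor \neg P(d)$ for every predicate $P$, and hence every symbol occurs in it. The paper instead makes the language an explicit component of each world: canonical worlds are pairs $(\w_\Gamma, \L(\O,\P',\C'))$ where $\Gamma$ is maximal consistent \emph{relative to} the sub-language $\L(\O,\P',\C')$, with $\P'$ infinite and co-infinite. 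Having genuinely different sub-languages at different worlds is not a nicety but a necessity: the consistent sentence $\forall X A_i(X(d)) \land \neg K_i \forall X A_i(X(d))$ can only be satisfied at a world $\w$ if agent $i$ considers possible a world whose language strictly contains $\A_i(\w)$, so if every canonical world carried the full language, the truth lemma would fail. Relatedly, your one-shot Henkin expansion is fragile at the modal step: to witness $\neg K_i\phi \land A_i\phi$ you must extend $\Gamma/K_i \cup \{\neg\phi\}$ to a maximal consistent set that again has witnesses inside \emph{its own} language. The paper's notion of acceptability (witnesses for all but finitely many predicates) is used precisely because it is preserved under adding finitely many sentences, under extension to maximal sets, and under passage to $\Gamma/K_i$, with co-infiniteness of $\P'$ supplying fresh predicates for the language $\P''$ of the accessible world; your sketch does not ensure the witness property survives this step. (Also, in the $\forall X$ case of the truth lemma it is this witness property, not Barcan$_X$ or FA$_X$, that does the work.)

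The second gap is in soundness, where you treat MP and the generalization rules as routine; in this logic they are exactly the delicate cases. Because validity only requires truth at states whose language contains the formula, one cannot instantiate the validity of $\phi$ and $\phi \rimp \psi$ at a state $\w$ with $\psi \in \L(\Phi(\w))$ but $\phi \notin \L(\Phi(\w))$; establishing soundness of modus ponens requires a language-expansion argument showing truth of formulas in the smaller language is preserved, and this is where the hypothesis that $\P$ is infinite enters on the soundness side (the paper invokes the corresponding lemmas of Halpern and R\^ego). Similarly, Gen$_{\forall_x}$ and Gen$_{\forall_X}$ are proved sound in the paper by constructing modified models---swapping the interpretations of two constants, or introducing a predicate whose interpretation and awareness mimic a given $\psi \in \Lbc$---not by a purely syntactic observation, so your proof would need to supply these arguments as well.
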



Since the logic is axiomatizable, the validity problem is recursively
enumerable.  This is also a lower bound on its complexity, even if we
do not allow quantification over predicates, since
first-order epistemic logic with just two unary predicates 
was shown by Kripke \citeyear{Kripke62} to be undecidable.  Kripke's proof
used the 
well-known fact that first-order logic with a single binary predicate $R$
is undecidable, and the observation that $R(x,y)$ can be represented
as $\neg K \neg (P(x) \land Q(y))$.  (We must add the formula $\forall x (A
(P(x) \land Q(x))$ to ensure that awareness does not cause a problem.)
\fullv{We thus get}
Thus,

\begin{thm}\label{thm:complexity} The validity problem for the
  language $\L(\O,\P,\C)$
  in the class of models of partial awareness is r.e.-complete if
  $|\P| \ge 2$.
  \end{thm}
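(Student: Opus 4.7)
The plan is to handle the two bounds separately. The upper bound follows from Theorem~\ref{thm:completeness}: a recursive sound and complete axiomatization makes the set of valid sentences r.e., simply by enumerating proofs in $\mathrm{AX}$. If $\P$ is finite with $|\P| \ge 2$ (the case Theorem~\ref{thm:completeness} does not directly cover), I would adjoin to $\mathrm{AX}$ the finite domain-closure schema $\forall X\, \varphi \dimp \bigwedge_{P \in \P \cup \C} \varphi[X/P]$, mirroring $\mathrm{Fin}_x$; a Henkin construction analogous to Theorem~\ref{thm:completeness} still goes through, again giving an r.e.\ enumeration.

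For the lower bound I would reduce from the validity problem for pure first-order logic over a signature with a single binary predicate $R$, which is r.e.-complete by Church's theorem. Fix two distinct predicates $P, Q \in \P$ and a single agent, labelled $1$. Given a first-order sentence $\phi$, let $\phi^{\mathrm{tr}}$ be obtained from $\phi$ by replacing each atomic subformula $R(s,t)$ with $\neg K_1 \neg (P(s) \land Q(t))$, and set
$$\phi^{*} \;=\; \alpha \rimp \phi^{\mathrm{tr}}, \qquad \alpha \;=\; \forall x \bigl(A_1 P(x) \land A_1 Q(x)\bigr).$$
The map $\phi \mapsto \phi^{*}$ is computable, so it suffices to prove that $\phi$ is first-order valid iff $\phi^{*}$ is valid in the class of partial-awareness models.

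For the forward direction, given $(M,\w) \models \alpha$, I build a first-order $R$-structure $\mathfrak{A}$ with domain $\O$ and $R^{\mathfrak{A}} = \{(s,t) : \exists\, \w' \in \K_1(\w)\text{ with } s \in P^{I}_{\w'}\text{ and } t \in Q^{I}_{\w'}\}$, and prove by induction on $\phi$ that $\mathfrak{A} \models \phi$ iff $(M,\w) \models \phi^{\mathrm{tr}}$. The atomic case uses $\alpha$ together with the paper's standing introspection assumption (agents know what they are aware of) to guarantee $A_1(P(s) \land Q(t))$ at every $\w' \in \K_1(\w)$, so that $\neg K_1 \neg (P(s) \land Q(t))$ reduces to the existence of an accessible world realising $P(s) \land Q(t)$. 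For the converse, given a countable $R$-structure $\mathfrak{A}$ on domain $A$ refuting $\phi$, I identify $A$ with the standard names of $\O$ (which I take to be countably infinite), and let $M$ be the S5 model whose unique $\K_1$-equivalence class at a base world $\w_0$ consists of $\w_0$ together with, for each $(a,b) \in R^{\mathfrak{A}}$, a world $\w_{a,b}$ with $P^{I}_{\w_{a,b}} = \{a\}$ and $Q^{I}_{\w_{a,b}} = \{b\}$, and $\A_1(\w') \supseteq \{P,Q\}$ and $\Phi(\w') \supseteq \{P,Q\}$ uniformly across the class so that both $\alpha$ and the introspection condition hold. The same induction then gives $(M,\w_0) \not\models \phi^{\mathrm{tr}}$, so $\phi^{*}$ fails at $\w_0$.

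The main obstacle is the bookkeeping around the state-indexed language $\Phi(\w)$ and the awareness operator. I must choose $\Phi$ and $\A_1$ uniformly enough across the equivalence class so that (i) $\phi^{\mathrm{tr}}$ and each of its subformulas lie in $\L(\Phi(\w'))$ at every world the induction visits, and (ii) $\alpha$ forces $A_1$-membership of every $K_1$-scoped subformula encountered. Once those uniformity conditions are in place, the induction is routine, and the atomic case implements precisely the Kripke encoding of a binary relation by $\neg K_1 \neg (P(\cdot) \land Q(\cdot))$, yielding r.e.-hardness and hence r.e.-completeness.
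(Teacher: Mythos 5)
Your proposal takes essentially the same route as the paper: the upper bound is read off from the recursive axiomatization of Theorem~\ref{thm:completeness}, and the lower bound is exactly Kripke's reduction from first-order validity with a single binary predicate, encoding $R(x,y)$ as $\neg K_1 \neg(P(x) \land Q(y))$ and adding the awareness guard $\forall x (A_1 P(x) \land A_1 Q(x))$ --- which is precisely the formula the paper adds ``to ensure that awareness does not cause a problem.'' Your two model constructions and the induction just make explicit what the paper leaves implicit, and your observations that introspection ($\A_1$ constant on $\K_1(\w)$, with $\A_1(\w') \subseteq \Phi(\w')$) keeps $P(s) \land Q(t)$ inside the language at every accessible world, and that $\O$ must be taken countably infinite (else Trakhtenbrot-type issues arise), are the right bookkeeping.

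One caveat, concerning your side remark about finite $\P$ (a case the paper itself silently glosses over, since Theorem~\ref{thm:completeness} assumes $\P$ infinite): the schema $\forall X\, \varphi \dimp \bigwedge_{P \in \P \cup \C} \varphi[X/P]$ is not sound, because the quantifier $\forall X$ ranges over all of $\Lbc(\Phi(\w))$, i.e., over Boolean combinations of predicates, not just atomic symbols. For instance, $\forall X\,\varphi$ must entail $\varphi[X/(P \land Q)]$, which does not follow from the conjunction over atoms, so the right-to-left direction of your biconditional fails semantically. The repair is to note that, by AGP, the truth of an instance $\varphi[X/\psi]$ depends on $\psi$ only through the set of predicate symbols occurring in $\psi$ together with the Boolean function $\psi$ denotes; for finite $\P$ there are only finitely many such pairs, so one can conjoin over a finite set of representatives of these classes. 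With that correction your treatment of the finite case goes through; the main (infinite-$\P$) argument needs no change and coincides with the paper's proof.
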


\section{Conclusion}

We have defined and axiomatized a modal logic
that captures partial unawareness by allowing
an agent to be 
aware of a concept without being aware
of the properties that define it. The logic also allows agents to reason about their own unawareness. 
\commentout{
We apply this logic to interpersonal contracts. The set of feasible
contracts is limited by the formulas the agents are aware of.
We show that when agents are
unaware, referencing concepts that agents are only partially aware of can
improve welfare. }
We show that such a logic is critical for analyzing interpersonal
contracts, and that referencing concepts that agents are only
partially aware of can improve welfare.
We believe that the logic should also be applicable to other domains,
such an analyzing communication between people. We hope to consider
such applications in the future.

We are also interested in analyzing dynamic aspects of awareness, and
applying this to contracts.
Our analysis of contracts assumed that both agents were aware of all
statements in a contract.  This makes sense after the contract has
been signed, but may well not be true before the contract is written.
We believe that an extension of our language to deal with the effects of
making other agents aware of certain formulas will allow us explore
and analyze the dynamic process of contract writing.  

\paragraph{Acknowledgments:}
Halpern was supported in part by NSF 
grants IIS-1703846 and IIS-1718108, 
ARO grant W911NF-17-1-0592, and a grant from the Open Philanthropy
project. 

\bibliography{joe,z}
\bibliographystyle{aaai}


\newenvironment{oldthm}[1]{\par\noindent{\bf Theorem #1:} \em \noindent}{\par}
\newenvironment{oldlem}[1]{\par\noindent{\bf Lemma #1:} \em \noindent}{\par}
\newenvironment{oldcor}[1]{\par\noindent{\bf Corollary #1:} \em \noindent}{\par}
\newenvironment{oldpro}[1]{\par\noindent{\bf Proposition #1:} \em \noindent}{\par}
\newcommand{\othm}[1]{\begin{oldthm}{\ref{#1}}}
\newcommand{\eothm}{\end{oldthm} \smallskip}
\newcommand{\olem}[1]{\begin{oldlem}{\ref{#1}}}
\newcommand{\eolem}{\end{oldlem} \smallskip}
\newcommand{\ocor}[1]{\begin{oldcor}{\ref{#1}}}
\newcommand{\eocor}{\end{oldcor} \smallskip}
\newcommand{\opro}[1]{\begin{oldpro}{\ref{#1}}}
\newcommand{\eopro}{\end{oldpro} \smallskip}

\fullv{
 \section{More on contracts}
In this section, we expand on the discussion of contracts in
Section~\ref{sec:contracts}.  We start with the proof of
Theorem~\ref{thm:contract1}.  For the reader's convenience, we repeat
the statement of the theorem. 

\smallskip

\othm{thm:contract1}
  Given a model $M = ( \W, \OO, \Phi, \A_1 \ldots, \A_n, \K_1,
\ldots, \K_n, I)$, preferences $\U$, endowments  $\End_1,\End_2 
\subseteq \OO$, and a state $\w^* \in \W$ such that $\A_1(\w^*) =
\A_2(\w^*)$
and $\K_1(\w^*) \cup \K_2(\w^*)$ is finite,
the following hold: 
\begin{enumerate}
\item[(a)] If $\<M,\U\>$ satisfies A1 and A2, then
there exists a contract that is $\w^*$-articulable, $\w^*$-efficient, and
  $\w^*$-acceptable for $i = 1,2$. 
 \item[(b)] If in addition, for $i=1,2$, $\A_i$ is constant on
  $\K_1(\w^*) \cup \K_2(\w^*)$, then
there exists a contract that is $\w^*$-articulable, $\w'$-efficient,
  and $\w'$-acceptable for all 
  $\w' \in \K_1(\w^*) \cup \K_2(\w^*)$.  
\item[(c)] If in addition $\<M,\U\>$ satisfies A3,
then there exists a contract $\<\Lambda, c\>$ that is
$\w^*$-articulable, $\w'$-efficient, 
and $\w'$-acceptable for $i$ at all
$\w' \in \K_1(\w^*) \cup \K_2(\w^*)$ such that   
$\Lambda \subseteq \Lbc$.  
\end{enumerate}
\eothm

\begin{proof}
  Define an equivalence relation $\sim$ on states by taking $\w \sim
\w'$ ff $\w$ and $\w'$ agree on all sentences that the agents are aware
of in state $\omega^*$ (recall that the agents are aware of the same
sentences in $\omega^*$); that is, $\w \sim \w'$ if
 \begin{align*}
\{ \phi \in& \L : (M,\w) \models \phi, (M,\w^*) \models A_1\phi \} = \\
&\{ \phi \in \L : (M,\w') \models \phi, (M,\w^*) \models A_1\phi \}.
 \end{align*}
(Note that since $\A_1(\w'^*) = \A_2(\w^*)$ by assumption, we could
 replace either occurrence of $A_1$ above by $A_2$ without affecting $\sim$.)
 For each $\w \in \W$, let $[\w]$ denote the $\w$'s \emph{cell}, that
is, the set of states $\sim$-equivalent to $\w$.

\begin{lemma}
\label{Lemma:implemented}
Let $\kappa: \W \to \End_1{\oplus}\End_2$ be a function such that
$\kappa^{-1}((d_1,d_2))$ is a union of $\sim$-cells for all
$(d_1,d_2) \in 
\End_1 \oplus   \End_2$. Then for all finite subsets
$\W'$ of $\W$,
there exists a
$\w^*$-articulable contract
$\<\Lambda, c\>$ 
such that  
$c(\w) = \kappa(\w)$ for all $\w \in \W'$.
\end{lemma}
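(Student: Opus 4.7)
The plan is to use the finiteness of $\W'$ to reduce the problem to distinguishing only finitely many $\sim$-cells, and then to build the contract out of Boolean atoms of a finite set of awareness-respecting sentences.

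First, I would partition $\W'$ according to $\sim$: since $\W'$ is finite, this yields finitely many blocks $E_1, \ldots, E_k$, each contained in a single $\sim$-cell. For every pair $i \neq j$, pick representatives $\w_i \in E_i$ and $\w_j \in E_j$; since $\w_i \not\sim \w_j$, the definition of $\sim$ delivers a sentence $\phi_{ij} \in \L(\A_1(\w^*))$ (equivalently $\L(\A_2(\w^*))$, by the assumption $\A_1(\w^*)=\A_2(\w^*)$) whose truth value differs at the two representatives. Replacing $\phi_{ij}$ by its negation if needed, I can arrange $\phi_{ij}$ to be true throughout $E_i$ and false throughout $E_j$, using the fact that all states in a single $\sim$-cell agree on every sentence in $\L(\A_1(\w^*))$. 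Setting $\chi_i = \bigwedge_{j \neq i} \phi_{ij}$ then gives a sentence in $\L(\A_1(\w^*)\cap \A_2(\w^*))$ that is true on $E_i$ and false on $E_j$ for every $j \neq i$.

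Next, I would take $\Lambda$ to be the set of all (satisfiable) Boolean atoms over $\{\chi_1, \ldots, \chi_k\}$, i.e., formulas of the form $\bigwedge_{i=1}^k \chi_i^{\epsilon_i}$ with $\epsilon_i \in \{+,-\}$. By propositional reasoning, $\bigvee_{\lambda \in \Lambda}\lambda$ is a tautology, and any two distinct atoms are propositionally mutually exclusive, so both well-definedness conditions on contracts are met. Each atom lies in $\L(\A_1(\w^*)\cap \A_2(\w^*))$, so the contract is $\w^*$-articulable. For each $\w \in E_i \subseteq \W'$, the unique atom true at $\w$ is $\lambda_i := \chi_i \wedge \bigwedge_{j\neq i}\neg\chi_j$, so I define $c(\lambda_i) := \kappa(\w)$ for any chosen $\w \in E_i$; this is well-defined because $E_i$ lies in a single $\sim$-cell and $\kappa$ is constant on $\sim$-cells by hypothesis, so every state in $E_i$ gets the same $\kappa$-value. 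The remaining atoms (those not of the form $\lambda_i$) are assigned arbitrary values in $\End_1 \oplus \End_2$.

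By construction, for every $\w \in \W'$, say $\w \in E_i$, the unique atom true at $\w$ is $\lambda_i$, and $c(\lambda_i) = \kappa(\w)$, so $c(\w) = \kappa(\w)$ on $\W'$, as required. The main subtlety is step two, namely verifying that $\chi_i$ behaves as desired on all of $E_i$ and on all of $E_j$ (not merely the chosen representatives); this is where the cell-invariance of sentences in $\L(\A_1(\w^*))$ under $\sim$ does the real work, and it is essentially immediate from the definition of $\sim$. The remaining verifications are routine propositional and syntactic bookkeeping.
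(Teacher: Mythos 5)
Your proof is correct and follows essentially the same approach as the paper's: extract distinguishing sentences in $\L(\A_1(\w^*)\cap\A_2(\w^*))$ from the definition of $\sim$, conjoin them into cell-identifying sentences, and build a mutually exclusive, exhaustive $\Lambda$ on which $c$ mirrors $\kappa$. The only cosmetic differences are that you partition $\W'$ into $\sim$-blocks up front and use all $2^k$ Boolean atoms for $\Lambda$, whereas the paper first treats pairwise non-equivalent representatives with a family of $k+1$ sentences ($\lambda_1,\ldots,\lambda_k$ plus a catch-all) and then extends to all of $\W'$ by cell-invariance.
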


\begin{proof}
Fix a finite subset $\W' = \{w_1, \ldots, \w_n\} \subseteq \Omega$.
First suppose that $\w_i \not\sim \w_j$ for $i \ne j$,
so that for all $i,j \leq n$, $i\neq j$, there exists a
sentence $\phi_{ij} \in  \L(\A_1(\w^*) \cap \A_2(\w^*))$ such that
$(M,\w_i) \models \phi_{ij}$ and $(M,\w_j) \not\models \phi_{ij}$.
%
For each $i \leq n$ let $\psi_i
= \bigwedge_{\{j \neq i\}} \phi_{i,j}$.
For $i = 1, \ldots, n$, let $$\lambda_i
= \bigwedge_{j\neq i} \neg \psi_j \land \psi_i; $$
let $\lambda_{n+1} = \land_{i=1}^n \neg \lambda_i$.
Thus,
for $i=1,\ldots, n$,
$\lambda_i$ is true at $\w_i$ and not true at $\w_j$ if
$j \neq i$.
Moreover, the formulas $\lambda_i$ are mutually exclusive and
exhaustive; exactly one is true at each state in $\Omega$.
Finally, let $\Lambda
= \{ \lambda_i: i \leq n\} \cup \{\bigwedge_{i\leq
%
n} \neg \lambda_i\}$ and define $c$ by taking $c(\lambda_i)
= \kappa(\w_i)$;
$c(\bigwedge_{i\leq n} \neg \lambda_i)$ can be defined arbitrarily.
Thus, $(\Lambda,c)$ is a contract such that $c(\w_i) = \kappa(\w_i)$
for all $i \le n$.

Now consider an arbitrary finite subset $\W'$ of $\W$.  Let $\W''$ be
a maximal subset of $\W'$ that contains at most one element of each
$\sim$-cell.  We can apply the construction above to $\W''$ to get a
$\w^*$-articulable contract $(\Lambda,c)$ such that $c(\w)
= \kappa(\w)$ for all $\w \in \W''$.
We claim that in fact $c(\w') = \kappa(\w')$ for all $\w' \in \W'$.
For if $\w' \in \W' - \W''$, then $\w' \sim \w$ for some
$\w \in \W''$.  By assumption, $\kappa(\w') = \kappa(\w)$, and
by construction, we must have $c(\w') = c(\w)$.  Thus $\kappa(\w') =
c(\w')$.  It follows that $c(\w') = \kappa(\w')$ for all
$\w' \in \W'$.  
\end{proof}

\commentout{
It is straightforward to show that an
  assignment $\kappa: \W \to \End_1{\oplus}\End_2$ can be 
implemented by a contract in state $\w^*$ if and only if $\kappa$
   is constant on all states within a cell (since two states in the same cell cannot
   be distinguished by formulas that the agents are aware of).  More 
   precisely, we must have  that 
\begin{equation}\label{eq1}
\mbox{   $\kappa^{-1}((d_1,d_2))$ is a union of cells for all $(d_1,d_2) \in
  \End_1 \oplus   \End_2$. }
\end{equation}
Thus, we can view $\kappa$ as a function from cells to $\End_1 \oplus
\End_2$, since it must act the same way on each state in a cell.}

\begin{lemma}
\label{lem:sameU}
If $\w,\w'\in \K_i(w^*)$, $w' \sim w$, and $\U$ satisfies A2, then
$U_{i,\w'} = U_{i,\w}$.
\end{lemma}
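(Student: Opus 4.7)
The plan is to observe that $\sim$ is designed precisely so that $\w$ and $\w'$ agree on the satisfaction of every sentence built out of symbols the agents are aware of at $\w^*$. Coupled with the standing introspection assumption ($\w \in \K_i(\w'')$ implies $\A_i(\w) = \A_i(\w'')$), this makes the agent-restricted property- and concept-profiles of each object the same in $\w$ and $\w'$, and then A2 immediately yields equality of the utility functions.

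More concretely, I would first fix notation: let $\A^* = \A_i(\w^*)$. Since $\w, \w' \in \K_i(\w^*)$, the assumption that agents know what they are aware of gives $\A_i(\w) = \A_i(\w') = \A^*$. Also, by hypothesis $\A_1(\w^*) = \A_2(\w^*)$, so $\A^* \subseteq \A_1(\w^*)$, which means that for every predicate symbol $P \in \A^* \cap \P$ and every standard name $d \in \O$, the atomic sentence $P(d)$ lies in $\L(\A_1(\w^*))$, hence $(M,\w^*) \models A_1 P(d)$. By the definition of $\sim$, it follows that $(M,\w) \models P(d)$ iff $(M,\w') \models P(d)$. Since standard names interpret to themselves and $P \in \A^* \subseteq \Phi(\w) \cap \Phi(\w')$, this is equivalent to $d \in P^I_\w \iff d \in P^I_{\w'}$. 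Therefore
\[
\prop^{\A_i}_\w(d) \;=\; \{P \in \A^* \cap \P : d \in P^I_\w\} \;=\; \{P \in \A^* \cap \P : d \in P^I_{\w'}\} \;=\; \prop^{\A_i}_{\w'}(d).
\]
The identical argument applied to $C(d)$ for each $C \in \A^* \cap \C$ (again $C(d) \in \L(\A_1(\w^*))$, so $\w \sim \w'$ transports its truth value) yields $\con^{\A_i}_\w(d) = \con^{\A_i}_{\w'}(d)$ for every $d \in D$.

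At this point A2 applies directly: for every $d \in D$, the equalities $\prop^{\A_i}_\w(d) = \prop^{\A_i}_{\w'}(d)$ and $\con^{\A_i}_\w(d) = \con^{\A_i}_{\w'}(d)$ give $U_{i,\w}(d) = U_{i,\w'}(d)$, so $U_{i,\w} = U_{i,\w'}$ as functions on $D$.

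There is no serious obstacle here; the lemma is essentially a bookkeeping exercise chaining together three ingredients: (i) introspection of awareness to equate the three awareness sets $\A_i(\w), \A_i(\w'), \A_i(\w^*)$; (ii) the definition of $\sim$, which needs the hypothesis $\A_1(\w^*) = \A_2(\w^*)$ to guarantee that $A_1 P(d)$ and $A_1 C(d)$ hold at $\w^*$ for the relevant $P$ and $C$; and (iii) A2 applied pointwise in $d$. The only place to be slightly careful is making sure that each atomic formula we transport across $\sim$ actually lies in the language of both states, but this is immediate from $\A^* \subseteq \Phi(\w) \cap \Phi(\w')$, which is itself a consequence of introspection.
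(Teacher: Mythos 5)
Your proof is correct and follows essentially the same route as the paper's: introspection of awareness gives $\A_i(\w) = \A_i(\w') = \A_i(\w^*)$, the definition of $\sim$ (together with $\A_1(\w^*) = \A_2(\w^*)$) then forces $\prop^{\A_i}_\w = \prop^{\A_i}_{\w'}$ and $\con^{\A_i}_\w = \con^{\A_i}_{\w'}$, and A2 concludes. The only difference is that you verify the agreement atomic formula by atomic formula (checking language membership explicitly), whereas the paper cites the set equality of aware-and-true sentences wholesale; this is just a finer level of bookkeeping, not a different argument.
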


\begin{proof}

First notice that if $w,w' \in \K_i(w^*)$, then $\A_i(\w) = \A_i(\w') =\A_i(\w^*)$. Thus, by the definition of $\sim$, 
if $\w \sim \w'$, then
$$\{ \phi \in \L : (M,\w) \models \phi \land A_i\phi \} =
\{ \phi \in \L : (M,\w') \models \phi \land A_i\phi \},$$
implying that $\prop_\w^{\A_i} = \prop_{\w'}^{\A_i}$ and
$\con_\w^{\A_i} = \con_{\w'}^{\A_i}$.
The lemma follows immediately from  A2. 
\commentout{
So, let $w' \in \K_j(w^*)$, $j\neq i$. By way of contradiction, assume that $U_{i,\w'}(d) \neq U_{i,\w}(d)$, for some $d$.
First, since $\w \in \K_i(\w^*)$ we have that $\A_i(\w) = \A_i(\w^*)$, and $\w' \in K_j(\w^*)$ with $j \neq i$, so $\A_j(\w') = \A_j(\w^*) =\A_i(\w^*)$.

Second, our assumption of different utilities , by the contrapositive of A2, posits the existence of a property\footnote{Or, concept. Everything follows \emph{mutatis mutandis}.} such that, without loss of generality, $P \notin \prop_\w^{\A_i}(d)$ and $P \in \prop_{\w'}^{\A_i}$. 

 There are two cases:
 
 \begin{itemize}
 \item[(i)] If $P \in \A_i(\w) = \A_i(\w^*)$ then $(M,\w) \models \neg P(d)$ and $(M,\w^*) \models A_i \neg P(d)$ but $(M,\w') \not\models \neg P(d)$. This contradicts the claim that $\w \sim \w'$. 
 \item[(ii)] If $P \notin \A_i(\w^*)$, $(M,\w') \models \exists Y(A_i Y(d) \land \neg A_j Y(d))$ and $(M,\w^*) \models A_i \exists Y(A_i Y(d) \land \neg A_j Y(d))$. There are two sub-cases:
 	\begin{itemize}
	\item[(ii.a)] $(M,\w) \not\models \exists Y(A_i Y(d) \land \neg A_j Y(d))$. This contradicts that $\w \sim \w'$ immediately.
	\item[(ii.b)] $(M,\w) \models \exists Y(A_i Y(d) \land \neg A_j Y(d))$. Therefore, there exists a $P'$ such that $P' \in A_i(\w) = A_i(\w^*) = A_j(\w')$ but $P' \notin \A_j(\w)$. But then
$(M,\w') \models A_j P'(d)$, $(M,\w^*) \models A_j(A_j P'(d))$ and $(M,\w) \not\models A_j P'(d)$,
contradicting $\w \sim \w'$.
	\end{itemize}
 \end{itemize}
 }
\end{proof}

%
For each $\sim$-cell $[\w]$, 
\commentout{
choose an element $\w_{rep} \in [\w]$ such that
\begin{itemize}
\item $\w_{rep} \in \K_1(\w^*)$ if $[\w] \cap \K_1(\w^*) \neq \emptyset$;
\item $\w_{rep} \in \K_2(\w^*)$ if $[\w] \cap \K_1(\w^*) = \emptyset$ and
$[\w'] \cap \K_2(\w^*) \neq \emptyset$;
\item $\w_{rep}$ is an arbitrary element of $[\w]$ otherwise.
\end{itemize}
}
define $U_{i,[\w]}: D \to \R$ by taking $U_{i,[\w]} = U_{i,\w'}$ for some 
$\w' \in \K_i(\w^*) \cap [\w]$ if $\K_i(\w^*) \cap [\w]$
is nonempty, and define $U_{i,[\w]}$ arbitrarily otherwise.

By Lemma \ref{lem:sameU}, $U_{i,[\w]}$ is well defined (in that it
does not depend on the choice of $\w'$).  Define
%
%
%
%
\begin{equation}\label{eq3}
\begin{array}{ll}
\kappa(\w) \in \argmax_{(d_1,d_2) \in \End_1{\oplus}\End_2} 
 \big\{U_{1,[\w]}(d_1) + U_{2,[\w]}(d_2):  \\
U_{i,[\w]}(d_i) \geq  \max_{d \in \End_i} U_{i,[\w]}(d), i=1,2 \big\}.
\end{array}
\end{equation}
%


We assume that the elements of $\End_1 \oplus \End_2$ are ordered so
that if more than element in $\End_1 \oplus \End_2$ is the argmax in
(\ref{eq3}), we always choose the least one.  This guarantees that
$\kappa(\w) = \kappa(\w')$ if $\w \sim \w'$.
Thus,
by Lemma \ref{Lemma:implemented}, there exists a $\w^*$-articulable
contract $\hat c$ such that $\hat c(\w) =  \kappa(\w)$ for all
$\w \in \K_1(\w^*) \cup \K_2(\w^*)$.
The next lemma shows that this contract is acceptable and efficient, and thus
establishes the claim of part (a).

\commentout{
By Lemma \ref{Lemma:implemented} we can implement this over
$\K_1(\w^*) \cup \K_2(\w^*)$, using a contract $\hat c$ 
that will turn out to be efficient and acceptable. We first show that
it satisfies these properties at the representative worlds $\w_{rep}$,
then, using 
}

\begin{lemma}\label{lem:a} The contract $\hat c$ is $\w^*$-efficient and
  $\w^*$-acceptable for $i = 1,2$. 
\end{lemma}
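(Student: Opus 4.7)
The plan is to exploit the fact that $\kappa$ has been built as an argmax of total utility subject to individual-rationality (IR) constraints, so both efficiency and acceptability will follow once we translate between the surrogate utilities $U_{i,[\w]}$ appearing in (\ref{eq3}) and the true utilities $U_{i,\w}$ at the relevant states. The key bridging observation is: for any $\w' \in \K_i(\w^*)$, the state $\w'$ itself belongs to $\K_i(\w^*) \cap [\w']$, so the well-definedness guaranteed by Lemma \ref{lem:sameU} permits the representative in the definition of $U_{i,[\w']}$ to be taken as $\w'$ itself, yielding $U_{i,[\w']} = U_{i,\w'}$. I would record this identification once and reuse it throughout; in particular it gives $U_{i,[\w^*]} = U_{i,\w^*}$ for $i = 1, 2$.

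For $\w^*$-acceptability for agent $i$, I would fix any $\w' \in \K_i(\w^*)$. By Lemma \ref{Lemma:implemented}, $\hat c(\w') = \kappa(\w')$, and by the definition of $\kappa$ in (\ref{eq3}), this allocation satisfies $U_{i,[\w']}(\kappa_i(\w')) \geq \max_{d \in \End_i} U_{i,[\w']}(d)$; the identification above then delivers $U_{i,\w'}(\hat c_i(\w')) \geq \max_{d \in \End_i} U_{i,\w'}(d)$, which is the acceptability condition. A preliminary sanity check needed here is that the argmax in (\ref{eq3}) is nonempty: taking $d_i^* \in \argmax_{d \in \End_i} U_{i,[\w]}(d)$ gives, because $\End_1$ and $\End_2$ are disjoint, a pair $(d_1^*, d_2^*) \in \End_1 \oplus \End_2$ that satisfies IR trivially.

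For $\w^*$-efficiency, I would argue by contradiction. Suppose some $(d_1, d_2) \in \End_1 \oplus \End_2$ Pareto-dominates $\hat c(\w^*)$ at $\w^*$, with at least one strict inequality. Combining the dominance with the $\w^*$-acceptability just established (and $U_{i,[\w^*]} = U_{i,\w^*}$) gives $U_{i,\w^*}(d_i) \geq \max_{d \in \End_i} U_{i,\w^*}(d)$ for each $i$, so $(d_1, d_2)$ itself is IR-feasible in (\ref{eq3}). The strict inequality then forces $U_{1,\w^*}(d_1) + U_{2,\w^*}(d_2) > U_{1,\w^*}(\hat c_1(\w^*)) + U_{2,\w^*}(\hat c_2(\w^*))$, contradicting that $\hat c(\w^*) = \kappa(\w^*)$ attains the argmax in (\ref{eq3}).

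The one point that really needs care is the opening identification $U_{i,[\w']} = U_{i,\w'}$: this requires $\w'$ to be an admissible representative of $[\w']$, which uses both $\w' \in \K_i(\w^*)$ and Lemma \ref{lem:sameU}, together with the assumption $\A_1(\w^*) = \A_2(\w^*)$ baked into the definition of $\sim$. Once this identification is in place, both verifications reduce to direct consequences of the constrained optimization that defines $\kappa$.
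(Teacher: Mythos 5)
Your proof is correct and follows essentially the same route as the paper's: identify $U_{i,[\w']}$ with $U_{i,\w'}$ for $\w' \in \K_i(\w^*)$ via Lemma~\ref{lem:sameU}, read acceptability off the IR constraints in (\ref{eq3}), and get efficiency by contradiction against the argmax. Your explicit check that the feasible set in (\ref{eq3}) is nonempty (using disjointness of $\End_1$ and $\End_2$) is a small detail the paper leaves implicit, but it does not change the argument.
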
 

\begin{proof}

For $\w \in \K_i(\w^*)$, $U_{i,\w} = U_{i,[\w]}$, so $\w^*$-acceptability follows from
the constraint of the maximization problem give by (\ref{eq3}).

To see that $\hat{c}$ is $\w^*$-efficient,
suppose, by way of contradiction, that there exists 
$(d_1, d_2) \in \End_1{\oplus}\End_2$ such that  
$$U_{i,\w^*}(d_i)) \geq U_{i,\w^*}(\hat c_i(\w^*)) =
U_{i,\w^*}(\kappa_i(\w^*))$$ 
for $i =1,2$, with at least one inequality strict.
%
Since $\w^* \in \K_1(\w^*)\cap \K_2(\w^*)$, $U_{i,\w^*} = U_{i,[\w^*]}$
for $i = 1,2$.
It follows that $(d_1, d_2)$ satisfies the constraints given by
(\ref{eq3}). But, since one inequality is strict, we have
\begin{align*}
U_{1,[\w^*]}(d_1) + U_{2,[\w^*]}(d_2) > \\
U_{1,[\w^*]}( \kappa_1([\w^*])) + U_{2,[\w^*]}(( \kappa_2([\w^*]),
\end{align*}
contradicting the definition of $ \kappa$. Hence no such $(d_1, d_2)$
exists, and $\hat c$ is $\w^*$-efficient.
\end{proof}

\smallskip

\commentout{
\noindent {\bf Claim:} For $i = 1,2$ and
all $\w \in \K_1(\w^*) \cup \K_2(\w^*)$ such that $\A_i(\w)
= \A_i(\w_{rep}) = \A_i(\w^*)$, we have that
$\hat c$ is $\w$-acceptable for $i$. If, in addition, $\A_j(\w)
= \A_j(\w_{rep}) = \A_j(\w^*)$, for $j \neq i$, then $\hat c$ is
$\w$-efficient.  }

\commentout{
\noindent {\bf Claim:} For $i = 1,2$ and
all $\w \in \K_1(\w^*) \cup \K_2(\w^*)$ such that $\A_i(\w)
= \A_i(\w_{rep}) = \A_i(\w^*)$, we have that
$\hat c$ is $\w$-acceptable for $i$. If, in addition, $\A_j(\w)
= \A_j(\w_{rep}) = \A_j(\w^*)$, for $j \neq i$, then $\hat c$ is
$\w$-efficient. 
\smallskip

\begin{proof}Fix $i$ and $\w \in \K_i(\w^*)$. If
$P \in \prop_\w^{\A_i}(d)$ then, since $\w \sim \w_{rep}$,
$P \in \prop_{\w_{rep}}^{\A_i}(d)$. Let
$P \notin  \prop_\w^{\A_i}(d)$. Then either (i) $P \notin \A_i(\w)$
so, by our assumption, $P \notin \A_i(\w_{rep})$, or (ii)
$P \in \A_i(\w)$ but $P(d)$ does not hold in state
$\w$. Since $i$ is aware of $P$ and $\w \sim \w_{rep}$,
$P(d)$ does not hold in state $\w_{rep}$ either. In either case,
$P \notin  \prop_{\w_{rep}}^{\A_i}(d)$. Hence, $\prop_\w^{\A_i}
= \prop_{\w_{rep}}^{\A_i}$. Similar reasoning shows that
$\con_\w^{\A_i} = \con_{\w_{rep}}^{\A_i}$.

Here is where the assumptions on utility come into play. A2 implies that,
because the agent cannot articulate a difference between $\w$ and $\w_{rep}$
then we must have $U_{i,\w} = U_{i,\w_{re}}$. Hence, for
all $\w \in \K_1(\w^*)$, $\hat c$ satisfies (\ref{eq4}), and is therefore $\w$-acceptable for $i$.
A similar argument yields $\w$-efficiency when both agents awareness, hence utilities, are invariant. 
\end{proof} 
}

Part (a) of Theorem~\ref{thm:contract1} follows from
Lemma~\ref{lem:a}.  For part (b), note that if $\A_i$ is constant on
$\K_1(\w^*) \cup \K_2(\w^*)$ for $i = 1,2$,
Lemma \ref{lem:sameU} can be strengthened to show that if 
$\w,\w' \in \K_1(\w^*) \cup \K_2(\w^*)$ then $U_{i,\w} = U_{i,\w'}$. 
The rest of the proof follows 
follows immediately from the added assumption that $\A_i$ is constant
on $\K_1(\w^*) \cup \K_2(\w^*)$ for $i=1,2$.
The rest of the proof follows
the proof of part (a) with the obvious adjustments.

For part (c), we consider a coarsening of $\sim$
that identifies states 
that agree on all sentences $\phi(d)$ with $\phi \in \Lbc$ that the
agents are aware of in $\w^*$.  More precisely,
$\w \sim^{bc} \w'$ if 
 \begin{align*}
\{\phi(d) : \phi \in \Lbc, (M,\w) \models \phi(d), (M,\w^*) \models
A_1\phi(d) \} = \\ 
\{\phi(d) : \phi \in \Lbc, (M,\w') \models \phi(d), (M,\w^*) \models
A_1\phi(d) \}. 
\end{align*}
The proof follows that of part (a) except the analogue of
Lemma \ref{Lemma:implemented} uses $\sim^{bc}$-cells and in
Lemma \ref{lem:sameU}, we appeal to A3 rather than A2. 
\end{proof}

\section{More on the axiom system}

Some comments on the axioms and rules of inference: AGP follows from
our assumption that awareness is generated from primitives: agent $i$
is aware of $\phi$ exactly if $i$ is aware of all the constant,
predicate, and concept symbols in $\phi$, and the assumption that
agents are aware of all the constant symbols in $\O$.  KA captures the fact that
agents know what they are aware of.  
T, 4, and 5 are the standard axioms of knowledge (when knowledge is
characterized by an equivalence relation), but 5
must be modified to deal with lack of awareness; the approach for doing
so goes back to Fagin and Halpern \citeyear{FH}.  A0 captures the fact
that to (explicitly) know $\phi$, $i$ must be aware of $\phi$.
%
$1_{\forall x}$, $1_{\forall X}$,
${\rm N}_{\forall x}$,
${\rm N}_{\forall X}$, ${\rm K}_{\forall x}$,
and ${\rm K}_{\forall X}$ are standard adaptations of axioms for
first-order logic, although we remark that the validity of $1_{\forall
  x}$ depends on the fact that the constants $d$ have the same
interpretation in all states (otherwise $\forall x K_i(P(x)) \rimp
K_i(P(c))$ would not be valid).

The Barcan$_x$ axiom is a standard axiom of first-order modal logic.
The converse is true as well (and provable from the other axioms).  
However, its validity depends on the fact that all the objects in $\O$
are in the language of each state.  As observed by HR, the analogue
for predicate variables does not hold.  For example, suppose that the
only predicate symbol in $\Phi(\w)$ is $P$ agent $i$ is aware of $P$
at $\w$.  It is easy to see that (since agents know what they
are aware of) $\forall X K_i A_i(X(c))$ holds at $\w$.  However,
if $i$ considers a world $\w'$ such that $\Phi(\w')$ includes $Q$ and
$i$ is not aware of $Q$ at $\w'$, then $K_i(\forall X A_i(X(c)))$ does
not hold at $\w$.  As discussed by HR, Barcan$_X$ is essentially the
closest approximation to the Barcan formula that we can get in models.
To understand it, suppose that $\phi$ is a formula with a free
predicate variable $X$.  Then Barcan$_X$ says that if, in a state
$\w$, agent $i$ is aware of all the predicates in $\phi$ and
if, no matter what formula $\psi \in\Lbc \cap \Phi(\w)$ is substituted
for $X$ in $\phi$, if $i$ is aware of all the predicates in $\psi$,
then $i$ knows $\phi[X/\psi]$, then in any state $\w'$ that $i$
considers possible where $i$ is aware of all predicates (so that
$\Phi(\w')$ just includes those predicates that $i$ is aware of),
$\forall X \phi$ holds.  It is not hard to see that this is sound.

In general, the axiom $\neg A_i (\phi)) \rimp K_i(\neg A_i \phi)$ is not valid;
If an agent is not aware of a sentence, then she can't know it, since
we are working with explicit knowledge (even though $\neg A_i \phi$ is
true in all the worlds that the agent considers possible.  FA$_X$ (which
essentially already appears in HR) gives a weak version of that axiom;
it says that if the agent is not aware of any predicates, then she
knows that.  The soundness of Fin$_x$ is clear.  Note that if $\O$ is
infinite, there is no corresponding axiom.

Although MP is standard, it is not obviously sound.  To understand the
problem, suppose
that $\phi$ and $\psi$ are sentences such that $\phi$ and $\phi \rimp
\psi$ are valid. To show 
that MP is sound, we would have to show that $\phi$ is valid. Suppose
not. Then there is a model $M^*$ and state $\w^*$ in $M^*$ such that
$(M^*, \w^*) \models \neg\psi$. 
If $(M^*, \w^*) \models \phi$ and $(M^*,\w^*)
\models \phi \rimp \psi$, then we have an immediate 
contradiction. Unfortunately, despite the validity of $\phi$ and $\phi \rimp
\psi$, we cannot conclude that $(M^*, \w^*) \models \phi$ and that $(M^*, \w^*)
\models \phi \rimp \psi$.  These sentences might not be in the language
at state $\w$.  The assumption that $\P$ is infinite allows us to deal
with this difficulty.

We now prove Theorem~\ref{thm:completeness}, which says that AX is a
sound and complete axiomatization if the set $\P$ of predicates is
infinite.  Since the proof follows exactly the
same lines as the HR soundness and completeness proof, we focus here
on the main differences.  

Soundness is straightforward, with the exception of MP,
Gen$_{\forall_x}$, Gen$_{\forall_X}$, 
and Barcan$_X$.
The soundness of MP is proved in Lemma A.1, Proposition A.1, and
Corollaries A.1 and A.2 in \cite{HR09}, under the assumption that the
set of primitive propositions is infinite;
the soundness of Barcan$_X$ is proved in Proposition A.3 in \cite{HR09}
(where the axiom is labeled Barcan$_X^*$). Essentially the identical
proofs show soundness in our setting.
We now prove that Gen$_{\forall_x}$ and Gen$_{\forall_X}$ are sound.

\commentout{
\begin{proposition} MP is sound.
\end{proposition}

\begin{proof}
Suppose that $\phi$ and $\phi \rimp \psi$ are valid.  We want to show
that $\psi$ is valid.  Suppose not.  Then there is a model $M^*$ and
state $\w^*$ in $M^*$ such that $(M^*,\w^*,V) \models \neg \psi$.  If it were
true that $(M^*,\w^*,V) \models \phi$ and
$(M^*,\w^*,V) \models \phi \rimp \psi$, we 
would have an immediate contradiction.  Unfortunately, despite the
validity of $\phi$ and $\phi \rimp \psi$, we cannot conclude that 
$(M,\w,V) \models \phi$ and $(M,\w,V) \models \phi \rimp \psi$, since $\phi$
may not be in $\L(\Phi(\w))$, and the validity of $\phi$ and
$\phi \rimp \psi$ requires only that these formulas be true at a state
$\w$ when they are in $\L(\Phi(\w))$.  

The soundness of MP follows easily from the following claim.

\smallskip

\noindent {\bf Claim:} Suppose that $M = (\Omega, D, \Phi, \ldots )$
and $M' = (\Omega, D, \Phi', \ldots)$ be models that are identical
except that at some state $\w^+ \in \Omega$, we have
$\Phi'(\w^+) \supseteq \Phi(\w^+)$ (and at all states $\w \ne \w^+$,
we have $\Phi'(\w) = \Phi(\w)$).  Then for all states $\w \in \Omega$,
all valuations $V$ such that $V(X) \in \L(\Phi(\w^+))$ for all variables
$X$, and
all formulas $\phi' \in \L(\Phi(\w^+))$, we have that
$(M,\w,V) \models \phi'$ iff $(M',\w,V) \models \phi'$.

\smallskip

\begin{proof} We first prove the claim for 
quantifier-free formulas by
induction on the number of free predicate variables and concept
symbols.  If $\phi'$ is quantifier-free and has no free predicate
variables or concept symbols, we proceed by induction on the structure
of $\phi'$.  The proof is immediate if $\phi'$ is an atomic formula,
and follows from the inductive hypothesis if $\phi'$ is a conjunction
or negation.  If $\phi'$ has the form $A_i(\phi'')$, the result is
immediate at all states $\w \ne \w^+$, and holds at $\w^+$ since we
have assumed that $\phi'$ (and hence $\phi''$) is in $\L(\Phi(\w^+))$. 
Finally, if $\phi'$ has the form $K_i \phi''$, the result is immediate
from the induction hypothesis.

We next prove by induction on $k$, with a subinduction on the
structure of $\phi'$, that if $\phi'$ has at most $k$ free
predicate variables and concept symbols, the claim holds.  The base
case was done above, and the inductive step is straightforward, since
if $\phi'$ has $k \ge 1$ predicate symbols and concept symbols, when
we evaluate an atomic formula, or concept, we reduce the number.

Finally, we prove by induction on $k$, with a subinduction on the
structure of $\phi$, that the claim holds if $\phi'$ has at most $k$
occurrences of predicate quantification (i.e., at most $k$
occurrences of $\forall X$ for some predicate variable $X$).  This
completes the proof.
\end{proof}

The proposition now follows easily from the claim.  For suppose by way
of contradiction that
$(M,\w,V) \models \neg \psi$.  We can assume without loss of
generality that $V(X) \in \L(\psi(\w))$ for all variables $X$.  For
if $V(X) \in \L(\phi(\w))$ for some variable $X$ that is free in
$\psi$, then we cannot have $(M,\w,V) \models \neg \psi$; and changing
$V(X)$ for variables $X$ that are not free in $\psi$ has no impact on
the truth of $\neg \psi$.  Let $M'$ be a model that is identical to
$M$ except that $\Phi'(\w)$ includes all the predicate and concept
symbols in $\phi$.  By the claim, $(M', \w, V) \models \neg \psi$.
Moreover, since $\phi$ and $\phi \rimp \psi$ are valid by assumption, and
$\phi, \phi \rimp \psi \in \L(\Phi'(\w))$ by construction, we have
that $(M',\w,\V) \models \phi \land (\phi \rimp \psi)$.  This gives us
the desired contradiction.
\end{proof}
}

\begin{proposition} Gen$_{\forall_x}$ and Gen$_{\forall_X}$ are sound.
\end{proposition}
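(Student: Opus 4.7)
The plan is to prove soundness of each rule by a semantic substitution argument: given a putative counterexample to the conclusion, construct a related model that witnesses a counterexample to the hypothesis, and use validity of the hypothesis to derive a contradiction.

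For Gen$_{\forall_x}$, suppose $\models \varphi$; to show $\models \forall x\, \varphi[c/x]$, I would fix an arbitrary $(M,\omega)$ with $\forall x\, \varphi[c/x] \in \L(\Phi(\omega))$. The semantic clause for $\forall x$ reduces the goal to $(M,\omega) \models \varphi[c/d]$ for every $d \in \O$. The key step is a \emph{constant-swap lemma}: for standard names $c$ and $d$, let $M^{c\leftrightarrow d}$ be obtained from $M$ by applying the transposition $(c\;d)$ to every extension $P^I_{\omega'}$ at every world, leaving $\Phi$, each $\A_i$, each $\K_i$, and the concept-defining formulas untouched. A routine induction on formulas yields $(M^{c\leftrightarrow d},\omega') \models \psi \iff (M,\omega') \models \psi^\tau$, where $\psi^\tau$ swaps every token of $c$ and $d$ in $\psi$. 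Applied to $\varphi$, validity yields $\models \varphi^\tau$; when $d$ does not already occur in $\varphi$, $\varphi^\tau$ coincides with $\varphi[c/d]$, and the case where $d$ does occur in $\varphi$ is handled by first renaming through a standard name $e$ that is fresh for $\varphi$ via the swap $(c\;e)$, and then invoking the lemma again.

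For Gen$_{\forall_X}$, suppose $\models \varphi$ and $P$ is a predicate symbol. To show $\models \forall X\, \varphi[P/X]$, I would fix $(M,\omega)$ and any $\psi \in \Lbc(\Phi(\omega))$; the semantic clause for $\forall X$ reduces the goal to $(M,\omega) \models \varphi[P/\psi]$. The parallel step is a \emph{predicate-replacement lemma}: define $M^{P/\psi}$ by setting $P^{I'}_{\omega'} = \{d \in D : (M,\omega') \models \psi(d)\}$ at every $\omega'$ whose language contains $\psi$, leaving the other components fixed. An induction on $\theta$ then gives $(M^{P/\psi},\omega') \models \theta \iff (M,\omega') \models \theta[P/\psi]$; the base case for $P(d)$ is the defining equality, and the case of a concept atom $C(d)$ unfolds via its Boolean defining formula $C^I_{\omega'}(d)$, to which the inductive hypothesis applies. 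Applied to $\theta = \varphi$, validity of $\varphi$ gives $(M,\omega) \models \varphi[P/\psi]$, as required.

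The main obstacle in both inductions is the awareness modality. Since $A_i\chi$ is defined purely syntactically as $\chi \in \L(\A_i(\omega))$ and the awareness function is never altered by the constructions above, the inductive step for $A_i$ requires that $\chi^\tau \in \L(\A_i(\omega))$ exactly when $\chi \in \L(\A_i(\omega))$ (and analogously for the predicate replacement). For the constant-swap this is immediate, since every standard name lies in every agent's awareness by assumption. For the predicate-replacement the needed closure of $\A_i(\omega)$ under $P \mapsto \psi$ is delicate; one must leverage the validity clause that only demands truth at states whose language contains the formula, together with the requirement that $\psi \in \Lbc(\Phi(\omega))$, and potentially enlarge $\Phi(\omega)$ to nominally include $P$ as in the MP soundness argument. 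The $K_i$ cases go through without incident because $\K_i$ is unchanged, so truth at every $\omega' \in \K_i(\omega)$ propagates through the induction; overall the argument parallels Proposition~A.3 of HR.
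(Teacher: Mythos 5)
Your Gen$_{\forall_x}$ argument is essentially the paper's: the paper likewise builds a swapped model $M'$ in which the roles of $c$ and $c'$ are exchanged in every predicate extension at every state, proves by induction that truth of $\psi$ in $M'$ matches truth of the swapped sentence in $M$, and concludes by contradiction with the validity of $\varphi$. (Your explicit treatment of the case where $d$ already occurs in $\varphi$, via a fresh name, is if anything more careful than the paper's statement of its induction claim.)

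For Gen$_{\forall_X}$, however, there is a genuine gap, and you have put your finger on it yourself without closing it. Your $M^{P/\psi}$ changes only the extension of $P$ and leaves $\Phi$ and the $\A_i$ fixed, and you then concede that the $A_i$ case is ``delicate'' and that one must ``potentially enlarge $\Phi(\omega)$.'' But this is not a deferrable side issue: with $\A_i$ untouched, the claimed equivalence $(M^{P/\psi},\omega')\models A_i\theta \iff (M,\omega')\models A_i(\theta[P/\psi])$ is simply false whenever agent $i$ is aware of $P$ but not of some predicate occurring in $\psi$ (or vice versa), since awareness is purely syntactic; and because $K_i$ here is \emph{explicit} knowledge (its truth condition requires $A_i$), the $K_i$ case collapses along with it, contrary to your claim that the $K_i$ cases go through ``without incident.'' Likewise, with $\Phi$ untouched, the base case fails at any state whose language contains $\psi$ but not $P$: there $(M^{P/\psi},\omega')\models P(d)$ is false for language reasons even when $(M,\omega')\models \psi(d)$. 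The paper's proof makes exactly these repairs part of the definition of the modified model: (a) $P$ is added to $\Phi^{M'}(\omega')$ precisely when $\psi \in \L(\Phi^{M}(\omega'))$; (b) the extension of $P$ mirrors that of $\psi$; and (c) the awareness functions are redefined so that $(M',\omega')\models A_i(P(c))$ iff $(M,\omega')\models A_i(\psi(c))$. With these three clauses the induction (including the $A_i$ and $K_i$ cases) goes through; without them your predicate-replacement lemma is not true as stated, so the second half of your proposal needs the awareness-and-language adjustment made explicit before it constitutes a proof.
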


\begin{proof} 
For Gen$_{\forall_x}$, suppose that $\models \phi$.  We claim that we
must have $\models  \phi[c/c']$ for all 
constants $c'$.  For suppose not.  Then there must be some model $M$
and state $\w$ such that $(M,\w) \models \neg \phi[c/c']$.  
Let $M'$ be a model
just like $M$, but with the roles of $c$ and $c'$ reversed.
Specifically, if $I^{M}$ and $I^{M'}$ denote the interpretation
functions of $M$ and $M'$, respectively,%
\footnote{We use analogous notation consistently below.}
then for each predicate $P$ and
each state $\w$, $I^{M'}$
is the result of replacing $c$ in $P_\w$ by $c'$
and replacing $c'$ in $P_\w$ by $c$.
It is easy
to show, by induction on the structure of $\psi$, that, for all sentences
$\psi$, we have that $(M,\w) \models \psi[c/c']$ iff $(M',\w) \models
\psi$.
(In doing the induction, we first consider how many occurrences of
$\forall X$, that is, how many instances of quantification over
predicates, there are in 
$\psi$, and then do a subinduction on the length of $\psi$.)
It follows that $(M',\w) \models \neg \phi$, contradicting the
validity of $\phi$.  Thus, $\models \phi[c/c']$ for all constants
$c'$, so $\models \forall x \phi$, as desired.

For Gen$_{\forall_X}$, again suppose that $\models \phi$. We want to
show that $\models \forall X \phi[P/X]$. Suppose not. Then there must
be some model $M$, state $\w$ in $M$, 
and sentence $\psi \in  \Lbc(\Phi^M(\w))$ such that $(M,\w) \models
\neg \phi[P/\psi]$. %
Let $M'$ be 
identical to $M$ except that, for each state $\w'$, (a) $\Phi^{M'}(\w')$
is the resulting of adding $P$ to $\Phi^{M'}(\w')$ iff $\psi \in
\L(\Phi^{M}(\w'))$; (b) $I^{M'}$ is such that, for all $c \in \O$, we have
that $(M',\w') \models P(c)$ iff $(M,\w') \models \psi(c)$; and (c) the
awareness functions are such that, for all agents $i$, $(M',\w')
\models A_i(P(c))$ iff $(M,\w') \models A_i(\psi(c))$.  Intuitively,
$\psi$ plays the same role in $M$ that $P$ does in $M'$.

Then a
straightforward induction 
on the structure of sentences shows that for all states $\w'$ and all
sentences $\phi'$, we have
that $(M, \w') \models  \phi'[P/\psi]$ iff $(M', \w') \models \phi'$.
Since, by assumption, $(M,\w) \models \neg \phi[P/\psi]$, it follows
$(M',\w') \models \neg
\phi$, contradicting the validity of 
$\phi$. 
%
%
Hence, we have that $\forall X \phi[P/X]$, as desired. 
\end{proof}

For completeness, fix a
language $\L(\O,\P,\C)$ where $\P$ is infinite.
We now list the major lemmas used by HR to prove completeness,
expanding only on the differences.
As usual, the idea of the completeness proof is to construct a
canonical model $M^c$ where the worlds are maximal consistent
sets of sentences. We then show that if $\w_\Gamma$ is the world
corresponding to the maximal consistent set $\Gamma$, then
$(M^c,\w_\Gamma)\models 
\varphi$ iff $\phi\in \Gamma$.  As observed in 
\cite{HR05b}, this will not
quite work in the presence of quantification; 
there may be a maximal
consistent set $\Gamma$ of sentences such that $\neg\forall X\phi\in \Gamma$,
but $\phi[X/\psi] \in \Gamma$ for all $\psi\in \Lbc$. That is, there is
no witness to the falsity
of $\forall X \phi$ in $\Gamma$. This problem was dealt with in \cite{HR05b} by
restricting to maximal consistent sets $\Gamma$ that are
\emph{acceptable} in the sense that if $\neg \forall X \phi \in \Gamma$,
then $\neg \phi[X/q] \in \Gamma$ for infinitely many
primitive propositions 
$q \in \Phi$.  (Recall that HR consider a propositional language; 
this notion of acceptability requires the language to include
infinitely many primitive propositions, as HR assumed it did.)
As in \cite{HR09}, because here we have possibly different languages
associated different worlds, we need to consider acceptability and
maximality with respect to a language.  The following definition is
adapted from \cite{HR09}.

\begin{definition}
A sentence $\phi$ is \emph{provable} from AX,
denoted $\mbox{AX} \vdash \phi$, if there is a sequence of sentences such
that the last one is $\phi$, and each one is either an instance of an
axiom of AX  or 
follows from previous sentences in the sequence by an application
of an inference  rule.  We write $\Gamma \vdash \phi$ if there are
sentences $\beta_1, \ldots, \beta_m$ in $\Gamma$ such that AX
$\vdash (\beta_1 
\land \ldots \land \beta_m) \rimp \phi$.
\end{definition}

\begin{definition}
A set $\Gamma$ of sentences is {\em acceptable with respect
to the language $\L(\O,\P',\C')$}
if, for all sentences $\phi\in \L(\O,\P',\C')$, if
$\Gamma\vdash \phi[X/P]$ for all 
but finitely predicate symbols  $P \in \P'$, then $\Gamma\vdash
\forall X \phi$.
\end{definition}

\begin{definition}
$\Gamma$ is a {\em maximal AX-consistent set of sentences with
respect to $\L(\O,\P',\C') \subseteq \L(\O,\P,\C)$}
if $\Gamma$ is an AX-consistent subset of $\L(\O,\P',\C')$ and, for
all sentences $\phi\in \L(\O,\P',\C')$, 
if $\Gamma\cup\{\phi\}$
is $AX$-consistent, then $\phi\in\Gamma$.  
\end{definition}

The following four lemmas are essentially Lemmas A.4, A.5, A.6, and A.7 in
\cite{HR05b}.  Since the proofs are essentially identical, we do not repeat them
here.

\begin{lemma}
\label{Claim4K}
If $\Gamma$ is a finite
set of sentences,
then $\Gamma$ is acceptable with
respect to every language $L(\O,\P',\C')$ such that $\P'$ is infinite.
\end{lemma}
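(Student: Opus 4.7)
The plan is to exploit the infinitude of $\P'$ to locate a fresh predicate symbol that can serve as a witness for universal generalization via Gen$_{\forall_X}$.

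First, I would note that since $\Gamma$ is finite and $\phi$ is a single sentence, only finitely many predicate symbols appear syntactically in $\Gamma \cup \{\phi\}$. Combined with the hypothesis that $\Gamma \vdash \phi[X/P]$ for all but finitely many $P \in \P'$, and the assumption that $\P'$ is infinite, I can pick a predicate symbol $P^* \in \P'$ satisfying three conditions simultaneously: (i) $P^*$ does not occur in any sentence of $\Gamma$, (ii) $P^*$ does not occur in $\phi$, and (iii) $\Gamma \vdash \phi[X/P^*]$.

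Second, unfolding the definition of $\vdash$, there exist $\beta_1, \ldots, \beta_m \in \Gamma$ with $\mbox{AX} \vdash (\beta_1 \land \cdots \land \beta_m) \rimp \phi[X/P^*]$. Applying Gen$_{\forall_X}$ with the fresh predicate $P^*$, I obtain
\[
\vdash \forall X \Big( \big((\beta_1 \land \cdots \land \beta_m) \rimp \phi[X/P^*]\big)[P^*/X] \Big).
\]
Because $P^*$ does not appear in any $\beta_i$, the substitution $[P^*/X]$ leaves each $\beta_i$ unchanged; and because $P^*$ does not appear in $\phi$ originally, $\phi[X/P^*][P^*/X] = \phi$. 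Hence $\vdash \forall X \big((\beta_1 \land \cdots \land \beta_m) \rimp \phi\big)$.

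Finally, since $X$ is not free in any $\beta_i$, axiom ${\rm N}_{\forall X}$ gives $\vdash (\beta_1 \land \cdots \land \beta_m) \rimp \forall X (\beta_1 \land \cdots \land \beta_m)$, and combined with ${\rm K}_{\forall X}$ applied to the displayed formula, I get $\vdash (\beta_1 \land \cdots \land \beta_m) \rimp \forall X \phi$, whence $\Gamma \vdash \forall X \phi$ by definition. The main delicate point is the syntactic bookkeeping around step two: verifying that the fresh choice of $P^*$ makes the two substitutions $[X/P^*]$ and $[P^*/X]$ mutually invert on both sides of the implication, so that Gen$_{\forall_X}$ produces exactly the formula needed. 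Everything else is routine modal-propositional reasoning.
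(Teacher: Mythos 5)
Your proof is correct and is essentially the paper's own argument: the paper does not spell out a proof of this lemma, deferring to Lemma A.4 of Halpern and R\^{e}go's earlier work, and that proof is precisely your fresh-witness construction (choose $P^*$ occurring in neither $\Gamma$ nor $\phi$ with $\Gamma \vdash \phi[X/P^*]$, apply Gen$_{\forall_X}$ to the AX-provable implication $(\beta_1 \land \cdots \land \beta_m) \rimp \phi[X/P^*]$, then finish with ${\rm K}_{\forall X}$, ${\rm N}_{\forall X}$, and propositional reasoning). Your freshness bookkeeping---ensuring $\beta_i[P^*/X]=\beta_i$ and $\phi[X/P^*][P^*/X]=\phi$, and applying the generalization rule to the full implication rather than to something merely derivable from $\Gamma$---is exactly what makes the step legitimate, and matches the cited argument.
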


\begin{lemma}
\label{Claim5K}
If $\Gamma$ is
acceptable with respect to
$\L(\O,\P',\C')$ and $\psi$ is a sentence
in $\L(\O,\P',\C')$,
then $\Gamma \cup \{\psi\}$ is
acceptable with respect to
$\L(\O,\P',\C')$. 
\end{lemma}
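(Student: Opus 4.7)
The plan is to unwind the definition of acceptability and use the deduction theorem to transfer the premise from $\Gamma \cup \{\psi\}$ back to $\Gamma$, where the hypothesis on $\Gamma$ can be applied, and then push $\psi$ back out using the fact that $X$ is not free in $\psi$.

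In detail, I would start by assuming that $\Gamma \cup \{\psi\} \vdash \phi[X/P]$ for all but finitely many predicate symbols $P \in \P'$, where $\phi \in \L(\O,\P',\C')$. By the deduction theorem (which is available since AX contains propositional logic and modus ponens), this gives $\Gamma \vdash \psi \rimp \phi[X/P]$ for all but finitely many $P \in \P'$. Because $\psi$ is a sentence, $X$ does not occur free in $\psi$, so $\psi \rimp \phi[X/P]$ coincides with $(\psi \rimp \phi)[X/P]$. Thus $\Gamma \vdash (\psi \rimp \phi)[X/P]$ for all but finitely many $P \in \P'$, and since $\psi \rimp \phi \in \L(\O,\P',\C')$, the acceptability of $\Gamma$ yields $\Gamma \vdash \forall X (\psi \rimp \phi)$.

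Next, I would convert $\forall X(\psi \rimp \phi)$ into $\psi \rimp \forall X \phi$. From axiom ${\rm K}_{\forall X}$ we obtain $\forall X(\psi \rimp \phi) \rimp (\forall X \psi \rimp \forall X \phi)$, and from axiom ${\rm N}_{\forall X}$ (applicable because $X$ is not free in $\psi$) we obtain $\psi \rimp \forall X \psi$. Chaining these gives $\Gamma \vdash \psi \rimp \forall X \phi$, and one more application of the deduction theorem yields $\Gamma \cup \{\psi\} \vdash \forall X \phi$, as required.

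The only subtle point is the implicit use of the deduction theorem in a system that contains modal axioms and quantifier generalization rules; however, since $\psi$ is a sentence (hence both closed under variables and eligible to be added as a hypothesis), and both Gen$_K$ and Gen$_{\forall X}$ are applied to theorems of AX rather than to hypotheses, the standard propositional deduction theorem suffices here. Apart from that bookkeeping, the argument is purely a routine manipulation of quantifier axioms, and I do not anticipate any substantive obstacle.
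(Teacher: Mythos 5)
Your proof is correct and follows essentially the same route as the paper, which does not spell out the argument but defers to Lemma A.5 of Halpern and R\^{e}go \citeyear{HR05b}, whose proof is exactly this: deduction theorem, acceptability of $\Gamma$ applied to $\psi \rimp \phi$ (legitimate since $X$ is not free in the sentence $\psi$), then ${\rm K}_{\forall X}$ and ${\rm N}_{\forall X}$. Your closing worry about the deduction theorem is moot under the paper's definitions: since $\Gamma \vdash \phi$ is defined to mean AX $\vdash (\beta_1 \land \cdots \land \beta_m) \rimp \phi$ for some $\beta_1, \ldots, \beta_m \in \Gamma$, inference rules are never applied to hypotheses, and the deduction theorem is immediate by propositional reasoning.
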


\begin{lemma}
\label{Claim6X}
If $\Gamma\subseteq \L(\O,\P',\C')$
is an acceptable AX-consistent
set of sentences with respect to $\L(\O,\P',\C')$,
then $\Gamma$ can be extended to a set of sentences that is
acceptable and  maximal
AX-consistent with respect to $\L(\O',\P',\C')$.
\end{lemma}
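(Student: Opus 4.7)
The plan is a Henkin/Lindenbaum construction done over the target language $\L(\O',\P',\C')$, extending $\Gamma$ in a sequence of stages while maintaining both AX-consistency and acceptability, and supplying witnesses for both kinds of quantifiers.

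First I would check that $\Gamma$, which starts out acceptable with respect to the smaller language $\L(\O,\P',\C')$, may be regarded as acceptable with respect to the richer language $\L(\O',\P',\C')$ as well. Given any formula $\phi \in \L(\O',\P',\C')$ with free predicate variable $X$ such that $\Gamma \vdash \phi[X/P]$ for cofinitely many $P \in \P'$, I would pick a constant $d\in\O$ and, by the symmetry of AX under renaming of object constants (using Gen$_{\forall_x}$), reduce to a formula $\phi'$ in the old language obtained by replacing each constant in $\O'\setminus\O$ by a fresh object variable bound by $\forall$; applying acceptability of $\Gamma$ in the smaller language and then instantiating gives $\Gamma \vdash \forall X \phi$.

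Next I would enumerate all sentences $\psi_1,\psi_2,\dots$ of $\L(\O',\P',\C')$ and build an ascending chain $\Gamma = \Gamma_0 \subseteq \Gamma_1 \subseteq \cdots$ as follows. At stage $n+1$ with $\Gamma_n$ AX-consistent and acceptable, first decide $\psi_{n+1}$: let $\Gamma_n' = \Gamma_n \cup \{\psi_{n+1}\}$ if this is consistent, otherwise $\Gamma_n' = \Gamma_n \cup \{\neg\psi_{n+1}\}$. By Lemma~\ref{Claim5K}, $\Gamma_n'$ remains acceptable. If the added sentence has the form $\neg\forall x\,\chi$, choose a constant $c\in\O'$ not occurring in $\Gamma_n'$ (available because $\O'$ is countable and $\Gamma_n'$ is finitely generated at this point; if $\O$ is finite, Fin$_x$ lets us skip this witness step entirely) and set $\Gamma_{n+1} = \Gamma_n' \cup \{\neg\chi[x/c]\}$; consistency follows from the freshness of $c$ together with Gen$_{\forall_x}$. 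If the added sentence has the form $\neg\forall X\,\chi$, I would invoke acceptability of $\Gamma_n'$: were $\Gamma_n' \cup \{\neg\chi[X/P]\}$ inconsistent for \emph{every} $P \in \P'$, then $\Gamma_n' \vdash \chi[X/P]$ for all $P$, so by acceptability $\Gamma_n' \vdash \forall X\chi$, contradicting the presence of $\neg\forall X\chi$; hence some $P$ works, and I set $\Gamma_{n+1} = \Gamma_n' \cup \{\neg\chi[X/P]\}$. A second appeal to Lemma~\ref{Claim5K} preserves acceptability.

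Finally, let $\Gamma^* = \bigcup_n \Gamma_n$. Maximality follows because every sentence of $\L(\O',\P',\C')$ was decided at some stage. AX-consistency follows because any derivation of a contradiction uses finitely many premises, hence would already contradict some $\Gamma_n$. Acceptability of $\Gamma^*$ also transfers from finite stages: if $\Gamma^* \vdash \phi[X/P]$ for cofinitely many $P$, then the witnessing finite subsets of $\Gamma^*$ lie in some $\Gamma_n$, so $\Gamma_n \vdash \phi[X/P]$ for cofinitely many $P$, and acceptability of $\Gamma_n$ yields $\Gamma_n \vdash \forall X\phi$, hence $\Gamma^* \vdash \forall X\phi$.

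The main obstacle is the predicate-witness step together with the language-enlargement issue: I must know that whenever $\neg\forall X\chi$ is thrown in, acceptability of the current set is strong enough to provide a concrete $P\in\P'$ witness, and I must show this holds even though the formula $\chi$ may mention constants from $\O'\setminus\O$ that were not in the original language for which $\Gamma$ was acceptable. This is exactly why the infinite cardinality of $\P$ (and hence $\P'$) and the preliminary lift of acceptability from $\L(\O,\P',\C')$ to $\L(\O',\P',\C')$ are both indispensable.
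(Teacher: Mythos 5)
Your construction follows the intended route---the paper gives no proof of this lemma at all, deferring to Lemma A.6 of Halpern and R\^{e}go \citeyear{HR05b} (``the proofs are essentially identical''), and that proof is exactly a Lindenbaum chain using Lemma~\ref{Claim5K} at each stage plus witness steps for the predicate quantifier---but your final step contains a genuine gap. Acceptability does \emph{not} transfer from the stages $\Gamma_n$ to the union $\Gamma^*$ by the compactness-style argument you give. The hypothesis ``$\Gamma^* \vdash \phi[X/P]$ for cofinitely many $P$'' involves infinitely many predicates $P$, each with its own finite witnessing subset of $\Gamma^*$, and these subsets may appear at unboundedly late stages; there is no single $\Gamma_n$ containing all of them, so you cannot conclude that some fixed $\Gamma_n$ proves $\phi[X/P]$ for cofinitely many $P$. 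Indeed, if acceptability did pass to unions of chains this way, your witness step would be superfluous: deciding each sentence and invoking Lemma~\ref{Claim5K} would already suffice. Nor does the single witness you add when $\neg\forall X\chi$ is decided repair the argument: it produces one $P_0$ with $\neg\chi[X/P_0] \in \Gamma^*$, but the acceptability hypothesis already tolerates finitely many exceptional predicates, and $P_0$ may simply be one of them. The standard fix, reflected in the paper's own gloss of HR05b (a maximal consistent set is acceptable iff whenever $\neg\forall X\phi \in \Gamma$, then $\neg\phi[X/q] \in \Gamma$ for \emph{infinitely many} $q$), is to revisit each sentence of the form $\neg\forall X\chi$ infinitely often (e.g., enumerate sentence--integer pairs) and at each visit add a \emph{fresh} witness; this is possible because at every finite stage the current set is acceptable and consistent, so infinitely many $P$ keep $\Gamma_n \cup \{\neg\chi[X/P]\}$ consistent while only finitely many have been used. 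Then $\neg\forall X\chi \in \Gamma^*$ forces $\Gamma^* \not\vdash \chi[X/P]$ for infinitely many $P$, which is acceptability of $\Gamma^*$ in contrapositive form.

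Two smaller points. The $\O'$ in the statement is a typo for $\O$: in this paper the languages attached to states differ only in their predicate and concept symbols ($\O_\w = \O$ at every state, and the canonical model takes $\OO = \O$), so your preliminary ``lift'' of acceptability to a larger constant set solves a nonexistent problem, and its renaming sketch would in any case need real work, since acceptability over the enlarged language quantifies over strictly more sentences $\phi$. Relatedly, your object-witness step is both unnecessary for this lemma---acceptability concerns only predicate quantification, and nothing in the statement demands Henkin witnesses for $\forall x$---and unjustified as written: $\Gamma_0 = \Gamma$ may be infinite and may already mention every constant of the countable set $\O$, so a constant ``not occurring in $\Gamma_n'$'' need not exist, and your claim that the set is ``finitely generated at this point'' is false.
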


Let $\Gamma/K_i=\{\phi:K_i\phi\in\Gamma\}$.
\begin{lemma}
\label{LemmaA3X}
If $\Gamma$ is 
a maximal $AX$-consistent set of sentences with respect to $\L(\O,\P',\C')$
that contains $\neg
K_i\phi$ and $A_i\phi$, then $\Gamma/K_i\cup\{\neg\phi\}$ is
$AX$-consistent.
\end{lemma}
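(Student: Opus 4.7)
The plan is to argue by contradiction in the standard style of completeness proofs for awareness logics, with special care given to the awareness bookkeeping needed to fire the knowledge-generalization rule Gen$_K$. Suppose $\Gamma/K_i \cup \{\neg \phi\}$ is AX-inconsistent. Then $\Gamma/K_i \vdash \phi$, so there exist finitely many sentences $\psi_1, \ldots, \psi_k \in \Gamma/K_i$ with $\mbox{AX} \vdash (\psi_1 \land \cdots \land \psi_k) \rimp \phi$. By the definition of $\Gamma/K_i$, we have $K_i \psi_j \in \Gamma$ for each $j$, and axiom A0 then yields $A_i \psi_j \in \Gamma$ for each $j$. Combined with the hypothesis $A_i \phi \in \Gamma$ and axiom AGP (which lets awareness of each predicate/concept symbol in a formula propagate to awareness of the whole formula), we conclude $A_i\bigl((\psi_1 \land \cdots \land \psi_k) \rimp \phi\bigr) \in \Gamma$.

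Next, I would apply Gen$_K$ to the provable sentence $(\psi_1 \land \cdots \land \psi_k) \rimp \phi$, using the awareness just obtained, to deduce
\[
K_i\bigl((\psi_1 \land \cdots \land \psi_k) \rimp \phi\bigr) \in \Gamma.
\]
Repeated use of axiom K together with propositional reasoning then gives
\[
\bigl(K_i \psi_1 \land \cdots \land K_i \psi_k\bigr) \rimp K_i \phi,
\]
and since each $K_i \psi_j$ is already in $\Gamma$, maximal $AX$-consistency of $\Gamma$ forces $K_i \phi \in \Gamma$. But $\neg K_i \phi \in \Gamma$ by hypothesis, so $\Gamma$ would be $AX$-inconsistent, a contradiction. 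Hence $\Gamma/K_i \cup \{\neg \phi\}$ is $AX$-consistent, as claimed.

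The main delicate point, and the reason the hypothesis $A_i \phi \in \Gamma$ is explicitly assumed, is the awareness bookkeeping before Gen$_K$ is applied: in this logic knowledge is explicit, so one cannot freely generalize a provable formula to its $K_i$-version without first certifying awareness. Propagating awareness through the implication $(\psi_1 \land \cdots \land \psi_k) \rimp \phi$ requires that $i$ be aware of every predicate and concept symbol occurring in it, which is exactly what A0 (applied to the $K_i \psi_j$'s), the assumption $A_i \phi \in \Gamma$, and AGP together guarantee. Once that hurdle is cleared the rest is routine propositional and normal-modal reasoning inside the maximal consistent set $\Gamma$.
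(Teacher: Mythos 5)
Your proof is correct and is essentially the paper's own argument: the paper does not spell out a proof of this lemma at all, deferring to Lemma A.7 of Halpern and R\^{e}go's earlier work, whose proof proceeds exactly as yours does (extract finitely many $\psi_1,\ldots,\psi_k \in \Gamma/K_i$ with $\mbox{AX}\vdash(\psi_1\land\cdots\land\psi_k)\rimp\phi$, certify awareness of this implication inside $\Gamma$ via A0 and AGP, then use Gen$_K$ and axiom K to force $K_i\phi\in\Gamma$, contradicting $\neg K_i\phi\in\Gamma$ and the consistency of $\Gamma$). The only point worth tightening is the application of Gen$_K$: as a rule of inference it operates on AX-theorems, not on members of $\Gamma$, so the clean formulation of your key step is that Gen$_K$ yields the theorem $A_i\chi\rimp K_i\chi$ for the theorem $\chi=(\psi_1\land\cdots\land\psi_k)\rimp\phi$, and then $A_i\chi\in\Gamma$ (which you correctly obtained from A0 and AGP) together with the paper's definition of $\Gamma\vdash$ gives $K_i\chi\in\Gamma$ by maximality --- the same ingredients you assembled, just ordered so that the rule never takes a premise from $\Gamma$.
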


The next lemma is essentially Lemma A.6 in \cite{HR09}, and again, its
proof is almost identical.  We say $\P'$
is \emph{co-infinite} if $\P - \P'$ is infinite.

\begin{lemma}\label{LemmaA4X}
If $\Gamma$ is
an acceptable
maximal AX-consistent set of
sentences with respect to $\L(\O,\P',\C')$, where
$\P'$ is infinite and co-infinite,
$\neg K_i\phi\in \Gamma$, and $A_i\phi\in\Gamma$, then there
is an infinite and co-infinite set $\P''$ of predicates and a
set $\Delta$ of sentences
that is an acceptable,
maximal AX-consistent set with respect to $\L(\O,\P'',\C')$ and
contains
$\Gamma/K_i \cup \{\neg\phi\}$.
Moreover, $A_i\psi \in \Delta$ iff $A_i\psi\in\Gamma$ for
all sentences $\psi$.
\end{lemma}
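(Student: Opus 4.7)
The starting point is Lemma~\ref{LemmaA3X}, which gives that $\Gamma_0 := \Gamma/K_i \cup \{\neg\phi\}$ is AX-consistent. The task is to extend $\Gamma_0$ to an acceptable maximal AX-consistent $\Delta$ over a language $\L(\O,\P'',\C')$ with $\P''$ infinite and co-infinite, such that additionally $A_i\psi \in \Delta$ iff $A_i\psi \in \Gamma$ for every sentence $\psi$.

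My plan has four steps. \emph{Step~1 (fix the language).} Take $\P'' := \P'$, which is already infinite and co-infinite. Let $\P''_{aw} := \{P \in \P' : A_iP(c) \in \Gamma\}$ (which by AGP does not depend on $c \in \O$), and define $\C''_{aw}$ analogously. \emph{Step~2 (pin down awareness).} Set
\[
\Sigma := \Gamma_0 \,\cup\, \{\neg A_i P(c) : P \in \P' \setminus \P''_{aw}\} \,\cup\, \{\neg A_i C(c) : C \in \C' \setminus \C''_{aw}\}.
\]
The positive statements $A_iP(c)$ for $P \in \P''_{aw}$ and $A_iC(c)$ for $C \in \C''_{aw}$ are already in $\Gamma/K_i$ via axiom KA and the maximality of $\Gamma$. \emph{Step~3.} Verify that $\Sigma$ is both AX-consistent and acceptable with respect to $\L(\O,\P',\C')$. \emph{Step~4.} Apply Lemma~\ref{Claim6X} to extend $\Sigma$ to the required acceptable maximal AX-consistent $\Delta$. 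The ``moreover'' clause then follows from AGP: $A_i\psi \in \Delta$ iff each primitive symbol in $\psi$ has its $A_i$-atom in $\Delta$, and by the construction of $\Sigma$ those atoms agree exactly with $\Gamma$.

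The main obstacle is the AX-consistency of $\Sigma$ in Step~3. Suppose for contradiction $\Sigma \vdash \bot$; by compactness and the deduction theorem,
\[
\Gamma_0 \vdash A_iP_1(c) \lor \cdots \lor A_iP_k(c) \lor A_iC_1(c) \lor \cdots \lor A_iC_\ell(c),
\]
where each $P_j \in \P' \setminus \P''_{aw}$ and each $C_m \in \C' \setminus \C''_{aw}$. Since every $\chi \in \Gamma/K_i$ satisfies $K_i\chi \in \Gamma$ and hence $A_i\chi \in \Gamma$ by axiom~A0, all primitive symbols appearing in $\Gamma_0$ lie in $\P''_{aw} \cup \C''_{aw}$, so none of the $P_j$ or $C_m$ occur in $\Gamma_0$. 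A fresh-predicate substitution argument---iteratively applying rule Gen$_{\forall_X}$ together with axiom~$1_{\forall_X}$ to collapse the fresh-predicate disjuncts, and axiom~Con (which realizes each concept as an existentially quantified predicate) to convert the concept disjuncts into predicate ones---drives this down to $\Gamma_0 \vdash \forall X A_iX(c)$. This is then pitted against the specific information in $\Gamma$: using $A_i\phi \in \Gamma$ with KA and axiom~5 to control the awareness profile at $\Gamma$, and noting that $\neg A_iQ(c) \in \Gamma$ for any witness $Q \in \P' \setminus \P''_{aw}$, one derives the desired contradiction. Acceptability of $\Sigma$ with respect to $\L(\O,\P',\C')$ then follows by a routine induction that extends Lemmas~\ref{Claim4K} and~\ref{Claim5K}: the added formulas all have the uniform shape $\neg A_i\alpha(c)$ for a primitive symbol $\alpha$, so the ``all but finitely many instances'' condition in acceptability lifts from $\Gamma$, with axiom FA$_X$ covering the total-unawareness edge case.
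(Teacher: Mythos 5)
There is a genuine and fatal gap, and it is located exactly at Step~1: fixing $\P''=\P'$ turns the lemma into a \emph{false} statement, so no amount of work in Steps~2--4 can succeed. The existential quantification over $\P''$ in the lemma is not a convenience; it is forced by awareness of unawareness. Concretely, let $c$ be a standard name, fix $Q\in\P'$, and let $\phi$ be $\neg\forall X\, A_i(X(c))$. Since $\phi$ mentions no predicate or concept symbols, $A_i\phi$ is provable (the vacuous case of AGP), hence lies in every maximal AX-consistent set. The finite set $\{\neg K_i\phi,\ \neg A_iQ(c)\}$ is satisfiable, hence AX-consistent: take two states $\w,\w'$ in one $\K_i$-cell with $\P_\w=\P'$, $\P_{\w'}=\P_{aw}$ for some infinite, co-infinite $\P_{aw}\subseteq\P'$ with $Q\notin\P_{aw}$, and $\A_i$ equal to $\P_{aw}$ (plus $\O$ and the concepts, interpreted inside $\Lbc(\P_{aw})$ at $\w'$) at both states; at $\w'$ the agent is aware of every symbol of the local language, so $\forall X A_i(X(c))$ holds there and $K_i\phi$ fails at $\w$, while $\neg A_iQ(c)$ holds at $\w$. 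By Lemma~\ref{Claim4K} this finite set is acceptable, and by Lemma~\ref{Claim6X} it extends to an acceptable maximal AX-consistent $\Gamma$ with respect to $\L(\O,\P',\C')$ satisfying all hypotheses of the lemma. Now $\neg\phi$ is (propositionally equivalent to) $\forall X A_i(X(c))$, so your $\Sigma$ contains both $\forall X A_i(X(c))$ and $\neg A_iQ(c)$ and is outright inconsistent by axiom $1_{\forall X}$; more to the point, \emph{any} $\Delta\supseteq\Gamma_0$ over the language $\L(\O,\P',\C')$ must contain $A_iQ(c)$, violating the ``moreover'' clause since $A_iQ(c)\notin\Gamma$. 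This also shows that the ``desired contradiction'' at the end of your Step~3 does not exist: $\Gamma_0\vdash\forall X A_i(X(c))$ is not absurd---it is precisely the formalization of agent $i$ considering it possible that she is aware of the whole language, the phenomenon the state-dependent languages (and Lemma A.6 of HR09, to which the paper defers for this proof) are designed to capture; your appeal to KA, axiom~5, and $\neg A_iQ(c)\in\Gamma$ cannot refute it, because $\Gamma_0\not\subseteq\Gamma$ (it contains $\neg\phi$) and agents cannot know their own unawareness ($K_i\neg A_iQ(c)$ would require awareness of a formula mentioning $Q$). A correct proof must choose $\P''$ as a function of $\Gamma$: when $\Gamma_0$ forces $\forall X A_i(X(c))$, the only viable choice is $\P''=\{P : A_iP(c)\in\Gamma\}$ (which one then shows is infinite, using acceptability of $\Gamma$ and FA$_X$, and co-infinite, being a subset of $\P'$); only when $\neg\forall X A_i(X(c))$ can consistently be added may the language retain, or freshly acquire, predicates the agent is unaware of. That case split on the language is the heart of the argument you are missing.

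A second, independent problem is your treatment of the concept disjuncts. Axiom Con asserts only that each concept is \emph{coextensive} with some Boolean combination of predicates; awareness in this logic is syntactic, and the paper's central notion of partial awareness is exactly that $A_iC(c)$ can hold while the agent is unaware of every property in $C$'s definition (and conversely). So Con cannot ``convert the concept disjuncts into predicate ones,'' and Gen$_{\forall X}$ cannot generalize them away either, since the logic has no quantification over concept symbols; the atoms $\neg A_iC(c)$ need a separate argument. (A smaller issue: acceptability of the infinite set $\Sigma$ is not ``routine,'' as Lemma~\ref{Claim4K} covers only finite sets; but this point is moot given the failure of Step~1.)
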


The following lemma is the one new lemma that we need to deal with
concepts and variables.  It illustrates the role of
acceptability in the construction.

\begin{lemma}\label{lem:new}
If $\Gamma$ is
an acceptable maximal AX-consistent set of
sentences with respect to $\L(\O,\P',\C')$ and $C \in \C'$,
then there exists a sentence
$\psi \in \Lbc(\P')$ such that $\forall x
(C(x) \dimp \psi(x)) \in \Gamma$.
\commentout{
\item[(b)] If $\Gamma$ is
an object-acceptable maximal AX-consistent set of
sentences with respect to $\L(\O',\P',\C')$, then for all object
variables $x$, there exists a constant $c \in \O'$ such that $\forall
X (X(x) \dimp X(c)) \in \Gamma$ and for all sentences $\phi(x)$,
$\phi(x) \dimp \phi(c) \in \Gamma$.
\end{itemize}
}
\end{lemma}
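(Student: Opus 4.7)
The plan is to apply axiom Con together with the acceptability property of $\Gamma$ to extract a primitive predicate in $\P'$ that provably defines $C$.

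First I would let $\phi(X)$ abbreviate the formula $\neg\bigl(\forall x\,(C(x) \dimp X(x))\bigr)$. The instance of axiom Con for $C$, namely $\exists X(\forall x(C(x)\dimp X(x)))$, is just $\neg\forall X\,\phi(X)$, and since it is an axiom (and a sentence of $\L(\O,\P',\C')$ because $C \in \C'$), it lies in $\Gamma$. Consistency of $\Gamma$ then rules out $\forall X\,\phi(X) \in \Gamma$, and maximality yields $\Gamma \not\vdash \forall X\,\phi(X)$ (otherwise the provable sentence would be in $\Gamma$).

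Next I would invoke the contrapositive of the acceptability hypothesis: were it the case that $\Gamma \vdash \phi[X/P]$ for all but finitely many predicate symbols $P \in \P'$, acceptability would force $\Gamma \vdash \forall X\,\phi$. Since the latter fails, there are infinitely many (in particular at least one) $P^* \in \P'$ with $\Gamma \not\vdash \phi[X/P^*]$, i.e.\ $\Gamma \not\vdash \neg\bigl(\forall x\,(C(x) \dimp P^*(x))\bigr)$. Appealing once more to maximality of $\Gamma$ with respect to $\L(\O,\P',\C')$, the sentence $\forall x\,(C(x) \dimp P^*(x))$ must itself belong to $\Gamma$. Taking $\psi := P^*$ gives a member of $\Lbc(\P')$ with $\forall x\,(C(x) \dimp \psi(x)) \in \Gamma$, as required.

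The only real content of the argument is recognizing how Con (which semantically says every concept coincides with some property) and acceptability (which plays the role of a Henkin witness property for predicate quantifiers) combine to force the existential witness to be realizable among the primitive predicate symbols of $\P'$—not merely among arbitrary Boolean combinations, which is what axiom $1_{\forall_X}$ alone would give. I expect the only place needing care is keeping straight that acceptability is defined only with respect to substitutions by primitive $P \in \P'$; fortunately a single primitive predicate is already a (degenerate) Boolean combination in $\Lbc(\P')$, so no further syntactic manipulation is needed.
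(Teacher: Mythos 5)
Your proof is correct and uses exactly the same ingredients as the paper's: axiom Con supplies $\exists X(\forall x(C(x)\dimp X(x)))$ in $\Gamma$, acceptability links the predicate quantifier to substitution by primitive predicates, and maximality places the witnessing sentence $\forall x(C(x)\dimp P^*(x))$ in $\Gamma$. The paper merely organizes this as a proof by contradiction (assuming no witness exists, deriving $\forall X\neg\forall x(C(x)\dimp X(x))\in\Gamma$ via acceptability, and contradicting Con), which is just the contrapositive of your direct argument, so the two proofs are essentially identical.
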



\begin{proof} 
Suppose, by way of contradiction, that there is no
sentence $\psi \in \Lbc(\P')$ such that $\forall x
(C(x) \iff \psi(x)) \in \Gamma$.  Then, since $\Gamma$ is maximal,
for all predicates $Q \in \P'$, we must have that
$\neg \forall x (C(x) \iff Q(x)) \in \Gamma$.  Since $\Gamma$ is
acceptable, it follows that $\forall X \neg \forall x (C(x) \iff
X(x)) \in \Gamma$.  But since $\Gamma$ contains
every instance of axiom Con, it contains $\exists X  
(\forall x (C(x) \iff X(x)))$ (i.e., $\neg \forall X \neg 
(\forall x (C(x) \iff X(x))) \in \Gamma$).  Thus,
$\Gamma$ is inconsistent, a contradiction.
\commentout{
Suppose, by way of contradiction, that there is no
sentence $\psi \in \L^{sb}(\P')$ such that $\forall x
(Y(x) \equiv \psi(x)) \in \Gamma$.  Then, since $\Gamma$ is maximal,
for all predicates $Q \in \P'$, we must have that
$\neg \forall x (Y(x) \equiv Q(x)) \in \Gamma$.  Since $\Gamma$ is
acceptable, it follows that $\forall X \neg \forall x (Y(x) \equiv
X(x)) \in \Gamma$.  By axiom $1_{\forall_X}$, it follows that
$\neg \forall x (Y(x) \equiv Y(x)) \in \Gamma$.  But by Prop and
Gen$_{\forall_x}$, $\forall x (Y(x) \equiv Y(x)) \in \Gamma$.  Thus,
$\Gamma$ is inconsistent, a contradiction.

For part (b), we proceed similarly by contradiction.  Suppose that for
some object variable $x$, there does not exist $c$ such that 
$\forall X (X(x) \equiv X(c)) \in \Gamma$.
Then $\neg \forall X (X(x) \equiv X(c)) \in \Gamma$ for all $c \in \O'$.
Since $\Gamma$ is object-acceptable, 
it follows that $\forall y (\neg \forall X (X(x) \equiv
X(y)) \in \Gamma$.  By axiom $1_{\forall_x}$, it follows that
$\neg \forall X (X(x) \equiv X(x)) \in \Gamma$.  It again easily
follows that $\Gamma$ is inconsistent.  

We now show that,  for the choice of $c$ for
which $\forall X (X(x) \equiv X(c)) \in \Gamma$, we have that
$\phi(x) \equiv \phi(c) \in \Gamma$ for all sentences $\phi$.
We proceed by induction on the structure of $\phi$.
\begin{itemize}
\item If $\phi$ is a
predicate, concept, or predicate variable, the result follows from 
axiom $1_{\forall_X}$.
\item For conjunction and negations it follows by Prop.
\item 
If $\phi$ has the form $\forall x \phi'$, it follows from
Gen$_{\forall_x}$ and K$_{\forall x}$.
\item  If $\phi$ has the form $\forall X \phi'$, it follows from
Gen$_{\forall_X}$ (since we can assume by induction that the result
holds for $\phi'[X/P]$) and K$_{\forall X}$.
\item If $\phi$ has the form $A_i\phi'$, it follows from AGP.
\item If $\phi$ has the form $K_i \phi'$, it follows from AGP,
Gen$_K$, K, and Prop.
\end{itemize}
}
\end{proof}

We now complete the completeness proof by constructing a canonical
model, essentially as is done in \cite{HR09}.

\begin{lemma}
\label{LemmaA5X}
If $\varphi$ is an $AX$-consistent sentence, then $\varphi$ is satisfiable.
\end{lemma}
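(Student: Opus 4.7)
The plan is to build a canonical model $M^c$ whose worlds are acceptable maximal AX-consistent sets of sentences, and then to show that at the world corresponding to any such set $\Gamma_0$ containing $\varphi$, the formula $\varphi$ is true. First, I extend $\{\varphi\}$ to an acceptable maximal AX-consistent set $\Gamma_0$ with respect to some language $\L(\O,\P_0,\C_0)$ where $\P_0$ is infinite and co-infinite and contains every predicate and concept symbol of $\varphi$; this is possible via Lemmas~\ref{Claim4K}, \ref{Claim5K}, and \ref{Claim6X}. The states of $M^c$ are all pairs $\w_\Gamma$ where $\Gamma$ is acceptable maximal AX-consistent with respect to some $\L(\O,\P',\C')$ with $\P'$ infinite and co-infinite; I set $\Phi^c(\w_\Gamma) = (\O,\P',\C')$.

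For each $\w_\Gamma$, the interpretation is defined as follows: $P^{I^c}_{\w_\Gamma} = \{d \in \O : P(d) \in \Gamma\}$ for $P \in \P'$; for each concept $C \in \C'$, invoke Lemma~\ref{lem:new} to pick a $\psi_C \in \Lbc(\P')$ with $\forall x (C(x) \iff \psi_C(x)) \in \Gamma$ and set $C^{I^c}_{\w_\Gamma} = \psi_C$. The awareness set $\A^c_i(\w_\Gamma)$ consists of those predicate and concept symbols $s$ such that $A_i s(d) \in \Gamma$ for some (equivalently, any, by AGP) object $d \in \O$. Finally, define $\K^c_i(\w_\Gamma) = \{\w_\Delta : \Gamma/K_i \subseteq \Delta \text{ and } \A^c_i(\w_\Delta) = \A^c_i(\w_\Gamma)\}$.

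Next I would prove the Truth Lemma: $(M^c,\w_\Gamma) \models \phi$ iff $\phi \in \Gamma$, for every sentence $\phi \in \L(\Phi^c(\w_\Gamma))$, by induction on the structure of $\phi$. The atomic case for properties is immediate from the definition of $I^c$; for a concept atom $C(d)$, the semantics unfolds to $\psi_C(d)$, and the equivalence $C(d) \iff \psi_C(d) \in \Gamma$ (from $1_{\forall_x}$ applied to the witness obtained in Lemma~\ref{lem:new}) closes the step. Boolean cases follow from maximal consistency. For $\forall x \phi$, one uses that $\O$ consists of standard names together with axiom $1_{\forall_x}$ and rule Gen$_{\forall_x}$. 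For $\forall X \phi$, acceptability is essential: if $\neg \forall X \phi \in \Gamma$, then some $P \in \P'$ yields $\neg \phi[X/P] \in \Gamma$, furnishing a witness in $\Lbc(\P')$. For $A_i \phi$, axiom AGP reduces matters to the atomic awareness formulas that define $\A^c_i$. For $K_i \phi$, the forward direction uses the definition of $\K^c_i$ together with the inductive hypothesis inside each $\w_\Delta \in \K^c_i(\w_\Gamma)$; the reverse direction, when $\neg K_i \phi, A_i \phi \in \Gamma$, uses Lemma~\ref{LemmaA4X} to build an acceptable maximal AX-consistent extension $\Delta$ of $\Gamma/K_i \cup \{\neg \phi\}$ whose awareness agrees with that of $\Gamma$, whereas when $A_i \phi \notin \Gamma$ the conclusion is immediate from the $A_i$ case and the semantics of $K_i$.

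The main obstacle is establishing the Truth Lemma uniformly across the interacting modalities: concepts (interpreted syntactically yet needing to validate the induction via $\psi_C$), predicate quantification (requiring acceptability to produce witnesses), and knowledge (requiring Lemma~\ref{LemmaA4X} to build epistemic witness worlds with matching awareness). A further subtlety is verifying that $\K^c_i$ is an equivalence relation satisfying the agents-know-what-they-are-aware-of property; this follows from T, 4, 5, and KA, but must be checked in tandem with the awareness-matching clause and with Barcan$_X$ and FA$_X$, which govern how awareness interacts with predicate quantification across worlds. Once the Truth Lemma is in place, $(M^c,\w_{\Gamma_0}) \models \varphi$ follows since $\varphi \in \Gamma_0$, establishing satisfiability.
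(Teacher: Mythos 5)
Your proposal follows essentially the same route as the paper's own proof: both construct a canonical model whose worlds are acceptable maximal AX-consistent sets paired with infinite and co-infinite languages, interpret predicates by membership, interpret concepts via Lemma~\ref{lem:new}, define awareness and the accessibility relations from the $A_i$- and $K_i$-formulas in each set, and conclude by a truth lemma applied to an acceptable maximal consistent extension of $\{\varphi\}$. If anything, you spell out the truth-lemma induction (the use of acceptability for the $\forall X$ case and of Lemma~\ref{LemmaA4X} for the $K_i$ case) in more detail than the paper, which simply appeals to standard arguments.
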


\begin{proof}
We construct a canonical model where the worlds are maximal
consistent sets of sentences.  However, now the worlds must also
explicitly include the language, and we assume 
that the language associated with each world is infinite and co-infinite.
Let $M^{c}=
(\W^c,\OO, \Phi^c,\A_1^c, \ldots, \A_n^c, \K_1^c,\ldots,\K_n^c,I^c)$ 
be the canonical awareness structure for $\L(\O,\P,\C)$,
constructed  as follows:
\begin{itemize}
\item $\W^c =\{(\w_\Gamma,\L(\O,\P',\C')): \Gamma$ is
a set of sentences that is acceptable and maximal
AX-consistent with respect to $\L(\O,\P',\C')$, 
$L(\O,\P',\C') \subseteq \L(\O,\P,\C)$, and $\P'$
is infinite and co-infinite$\}$;

\item $\OO = \O$;

\item $\Phi^c((\w_\Gamma,\L))=\L$;

%
\item $\A_i^c\w_\Gamma,\L) = \{\psi: A_i \psi \in \Gamma\}$;

\item $\K_i^c((\w_\Gamma,\L))= \{(\w_\Delta,\L'):\Gamma/X_i\subseteq
\Delta \mbox{ and } A_i\phi\in \Gamma \mbox{ iff }A_i\phi\in 
\Delta \mbox{ for all sentences }\phi\}$;
\item $I^c_{(\w_\Gamma,\L)}(P) = \{c: P(c) \in \Gamma\}$;
\item $I^c_{(\w_\Gamma,\L)}(C) = \psi$ if $\forall
x(C(x) \dimp \psi(x)) \in \Gamma$.  (By Lemma~\ref{lem:new}, there
will be such a $\psi$; if there is one more than one, we can pick the
least one in some ordering of formulas.) 
\end{itemize}

Standard arguments now show that $\psi \in \Gamma$ iff $(M^c,
(\w_\Gamma, \L)) \models \psi$. It 
follows from Lemmas A.1 and A.3 that  every
consistent sentence is in some acceptable and maximal consistent set
with respect to a language $\L(O, \P', \C')$ such that $\P'$ is infinite
and co-infinite. Thus, if $\phi$ is consistent, there is some state $(\w_\Gamma,\L)$
in the canonical model such that $(M^c, (\w_\Gamma, \L)) \models
\phi$. This proves the lemma and the theorem. 
\end{proof}
}
\end{document}